\ifpdf\setlength{\pdfpagewidth}{8.5in}\setlength{\pdfpageheight}{11in}\fi
\newtheorem{theorem}{Theorem}[section]
\newtheorem{lemma}[theorem]{Lemma}
\newtheorem{definition}[theorem]{Definition}
\newcommand{\eat}[1]{}
\newcommand{\comment}[1]{}
\long\def\skipthis#1{}
\def\squareforqed{\hbox{\rlap{$\sqcap$}$\sqcup$}}
\def\qed{\ifmmode\else\unskip\quad\fi\squareforqed}
\def\smartqed{\def\qed{\ifmmode\squareforqed\else{\unskip\nobreak\hfil
\penalty50\hskip1em\null\nobreak\hfil\squareforqed
\parfillskip=0pt\finalhyphendemerits=0\endgraf}\fi}}
\title{Hierarchical Heavy Hitters with the Space Saving Algorithm}
\title{Hierarchical Heavy Hitters with the Space Saving Algorithm}
\author{
Michael~Mitzenmacher \and
Thomas Steinke \and
Justin Thaler\\
School of Engineering and Applied Sciences\\
Harvard University,
Cambridge, MA 02138\\ Email: \{jthaler, tsteinke\}@seas.harvard.edu \, \, michaelm@eecs.harvard.edu} 
\date{}
\begin{document}
\maketitle

\begin{abstract} 
The Hierarchical Heavy Hitters problem extends the notion of frequent items to data arranged in a hierarchy. This problem has applications to network traffic monitoring, anomaly detection, and  DDoS detection. We present a new streaming approximation algorithm  for computing Hierarchical Heavy Hitters that has several advantages over previous algorithms. It improves on the worst-case time and space bounds of earlier algorithms, is conceptually simple and substantially easier to implement, offers improved accuracy guarantees, is easily adopted to a distributed or parallel setting, and can be efficiently implemented in commodity hardware such as ternary content addressable memory (TCAMs). We present experimental results showing that for parameters of primary practical interest, our two-dimensional algorithm is superior to existing algorithms in terms of speed and accuracy, and competitive in terms of space, while our one-dimensional algorithm is 
also superior in terms of speed and accuracy for a more limited range of parameters.



\end{abstract}

\section{Introduction}

Finding \textit{heavy hitters}, or frequent items, is a fundamental
problem in the data streaming paradigm.  As a practical motivation,
network managers often wish to determine which IP addresses are
sending or receiving the most traffic, in order to detect anomalous
activity or optimize performance.  Often, the large volume of network
traffic makes it infeasible to store the relevant data in
memory. Instead, we can use a \textit{streaming} algorithm to compute (approximate)
statistics in real time given sequential access to the data and using
space sublinear in both the universe size and stream length.

We present and analyze a streaming approximation algorithm for a
generalization of the Heavy Hitters problem, known as Hierarchical
Heavy Hitters (HHHs).
The definition of HHHs is motivated by the observation that some data
are naturally hierarchical, and
ignoring this when tracking frequent items may mean the loss
of useful information.  Returning to our example of IP addresses,
suppose that a single entity controls all IP addresses of the subnet
021.132.145.*, where * is a wildcard byte. It is possible for the
controlling entity to spread out traffic uniformly among this set of
IP addresses, so that no single IP address within the set of addresses
021.132.145.* is a heavy hitter.  Nonetheless, a network manager may
want to know if the sum of the traffic of all IP addresses in the
subnet exceeds a specified threshold.  

One can expand the concept further to consider multidimensional
hierarchical data.  For example, one might track traffic between
source-destination \emph{pairs} of IP addresses at the router level.  In that
case, the network manager may want to know if there is a Heavy Hitter
for network traffic at the level of two IP addresses, between a source
IP address and a destination subnet, between a source subnet and a
destination IP address, or between two subnets. This motivates the
study of the \emph{two-dimensional} HHH problem.

There is some subtlety in the appropriate
definitions, as it makes sense to require that an element is not marked
as an HHH simply because it has a significant descendant, but because
the \textit{aggregation} of its children makes it significant;
otherwise, the algorithm returns redundant, less helpful information.
We present the definitions shortly, following previous
work that has explored HHHs for both one-dimensional and 
multi-dimensional hierarchies \cite{seminal, pods, conference, journal, esv, china, france}.

HHHs have many applications, and have been central to proposals for
real-time anomaly detection \cite{anomaly} and DDos detection
\cite{DDos}. While IP addresses serve as our motivating example
throughout the paper, our algorithm applies to arbitrary hierarchical
data such as geographic or temporal data. We demonstrate that our
algorithm has several advantages, combining improved worst-case
time and space bounds with more practical advantages such as simplicity,
parallelizability, and superior performance on real-world data.

Our algorithm utilizes the Space Saving algorithm, proposed by
Metwally et al. \cite{2005}, as a subroutine.  Space Saving is a
\emph{counter-based} algorithm for estimating item frequencies,
meaning the algorithm tracks a subset of items from the universe,
maintaining an approximate count for each item in the subset.
Specifically, the algorithm input is a stream of pairs $(i,c)$ where
$i$ is an item and $c > 0$ is a frequency increment for that item.  At
each time step the algorithm tracks a set $T$ of items, each with a
counter.  If the next item $i$ in the stream is in $T$, its counter is
updated appropriately.  Otherwise, the item with the smallest counter
in $T$ is removed and replaced by $i$, and the counter for $i$ is set
to the counter value of the item replaced, plus $c$.  This
approach for replacing items in the set may seem counterintuitive,
as the item $i$ may have an exaggerated count after placement, but the result
is that if $T$ is large enough, all Heavy Hitters will appear in the final set. 
Indeed, Space Saving has recently been identified as the most accurate and
efficient algorithm in practice for computing Heavy Hitters
\cite{survey}, and, as we later discuss, it also possesses strong
theoretical error guarantees \cite{2009}.

\newlength{\figwidthes}
\setlength{\figwidthes}{0.43\textwidth}
\begin{figure}[t]
\centering
\includegraphics[width=\figwidthes]{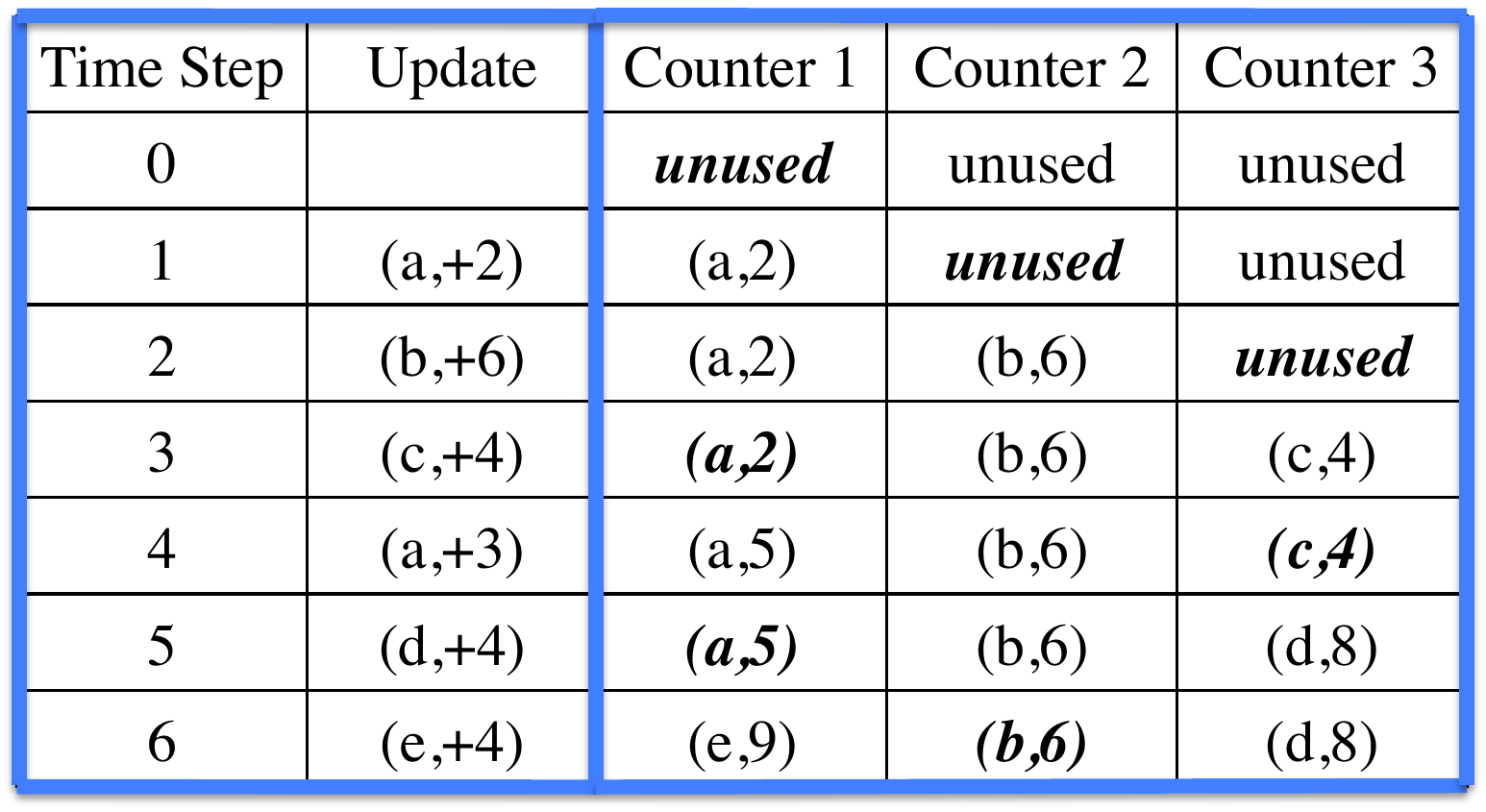}
\vspace{-2mm}
\caption{Sample execution of Space Saving with 3 counters. Each counter tracks an item (denoted by a letter), 
 and the estimated frequency of that item. The smallest counter is boldfaced and italicized.}
\label{fig:timep}
\hspace*{-6mm}
\end{figure}

\subsection{Related Work}

We require some notation to introduce prior related work and our
contributions; this notation is more formally defined in
Section~\ref{sec:definitions}.  In what follows, $N$ 
is the sum of all frequencies of items in the stream,
$\epsilon$ is an accuracy parameter so that all
outputs are within $\epsilon N$ of their true count, and $H$
represents the size of the hierarchy (specifically, the size of the
underlying lattice) the data belongs to.  Unitary updates refer to
systems where the count for an item increases by only 1 on each step,
or equivalently, where we just count item appearances.  

The one-dimensional HHH problem was first defined in \cite{seminal},
which also gave the first streaming algorithms for it.  Several
possible definitions and corresponding algorithms for the
multi-dimensional problem were introduced in \cite{conference,
  journal}. The definition we use here is the most natural, and was considered
  in several subsequent works \cite{pods, france}. In terms of
practical applications, multi-dimensional HHHs were used in \cite{offline,esv} to find patterns of traffic termed ``compressed traffic
clusters", in \cite{anomaly} for real-time anomaly detection, and in
\cite{DDos} for DDoS detection.

The Space Saving algorithm was used in \cite{china} in algorithms for
the one-dimensional HHH problem. Their algorithms require
$O(H^2 / \epsilon)$ space, while our algorithm
requires $O(H / \epsilon)$ space. Very recently, \cite{france} presented
an algorithm for the two-dimensional HHH problem, requiring $O(H^{3/2}/\epsilon)$
space.

Other recent work studies the HHH problem with a focus on developing algorithms well-suited
to commodity hardware such as ternary content-addressable memories (TCAMs) \cite{rexford}. 
Our algorithms are also well-suited to commodity hardware, as we describe in Section \ref{sec:extensions}.
The primary difference between the present work and \cite{rexford} is that the algorithms of \cite{rexford} 
reduce overhead by only updating rules periodically, rather than on a per-packet basis. This leads
to lightweight algorithms with \emph{no} provable accuracy guarantees. However, simulation results in \cite{rexford}
suggest these algorithms perform well in practice. In contrast, our algorithms possess very strong accuracy guarantees,
but likely result in more overhead than the approach of \cite{rexford}. Which approach is preferable may depend
on the setting and on the constraints of the data owner.
 
\subsection{Our Contributions}

In solving the Approximate HHH problem, there are three metrics that we seek to optimize: the time and space required to process each update and to output the list of approximate HHHs and their estimated frequencies and the quality of the output, in terms of the number of prefixes in the final output and the accuracy of the estimates. 
Our approach has several advantages over previous work. 

\begin{enumerate}
\item The worst-case space bound of our algorithm is $O(H/\epsilon)$. Notice this does not depend on the sum of the item frequencies, $N$, 
as $H$ depends only on the size of the underlying hierarchy and is independent of $N$.
This beats the worst-case space bound of $O\left( \frac{H}{\epsilon}\log{\epsilon N}\right)$ from \cite{conference} and \cite{journal}, the $O(H^2/\epsilon)$ bound for the one-dimensional algorithm of \cite{china}, and the $O(H^{3/2}/\epsilon)$ bound for the two-dimensional algorithm of \cite{france}. Additionally our algorithm provably requires $o(H/\epsilon)$ space under realistic assumptions on the frequency distribution of the stream. 

\item The worst-case time bound for our algorithm \emph{per} insertion
operation is $O(H \log{\frac{1}{\epsilon}})$ in the case of
arbitrary updates and $O(H)$ in the case of unitary updates.
Again this does not depend on $N$.  
Previous time bounds per insert were $O(H \log{\epsilon N})$ in \cite{seminal, conference, journal}.

\item  We obtain a refined analysis of error propagation to achieve better accuracy 
guarantees and provide non-trivial bounds on the number of HHHs output by our algorithm in one and two dimensions. These bounds were not provided for the algorithms in \cite{seminal, conference, journal}.

\item The space usage of our algorithm can be fixed \textit{a priori}, independent of the sum of frequencies $N$, as it only depends on the number of counters maintained by each instance of Space Saving, which we set at $\frac{1}{\epsilon}$ in the absence of assumptions about the data distribution. In contrast, the space usage of the algorithms of \cite{conference} and \cite{journal} depends on the input stream, and these algorithms dynamically add and prune counters over the course of execution, which can be infeasible in realistic settings. 

\item Our algorithm is conceptually simpler and substantially easier to implement than previous algorithms. We firmly believe \textit{programmer time} should be viewed as a resource similar to running time and space. We were able to use an off-the-shelf implementation of Space Saving, but this fact notwithstanding, we still spent roughly an order of magnitude less time implementing our algorithms, compared to those from \cite{conference, journal}. 

\item Our algorithms extend easily to more restricted settings. For example, we describe in Section \ref{sec:extensions} how to efficiently implement
our algorithms using TCAMs, how to parallelize them, how to apply them to distributed data streams, and how to handle sliding windows or
streams with deletions.
\end{enumerate}
\medskip
We present experimental results showing that for parameters of primary practical interest, our two-dimensional algorithm is superior to existing algorithms in terms of speed and accuracy, and competitive in terms of space, 
while our one-dimensional algorithm is 
also superior in terms of speed and accuracy for a more limited range of parameters. In short, we believe our algorithm offers a significantly better combination of simplicity and efficiency
than any existing algorithm.


\section{Notation, Definitions, and Setup}
\label{sec:definitions}

\subsection{Notation and Definitions}
As mentioned above, the theoretical framework developed in this section was described in \cite{journal}, and
considered in several subsequent works \cite{pods, france}.

In examples throughout this paper, we  consider the IP address
hierarchy at bytewise granularity: for example, the \textit{generalization} of
021.132.145.146 by one byte is 021.132.145.*, by two bytes is
021.132.*.*, by three bytes is 021.*.*.*, and by four bytes is
*.*.*.*. In two dimensions, we consider pairs of IP addresses,
corresponding to source and destination IPs. Each IP prefix that is
not fully general in either dimensions has two parents. For example,
the two parents of the IP pair (021.132.145.146, 123.122.121.120) are
(021.132.145.*, 123.122.121.120) and (021.132.145.146, 123.122.121.*).

In general, let the dimension of our data be $d$, and the height of the hierarchy in the $i$'th dimension be $h_i$. 
In the case of pairs of IP addresses, $d=2$ and $h_1=h_2=4$. Denote by $\text{par}(e, i)$ the generalization 
of element $e$ on dimension $i$; for example, if 
$$e=(021.132.145.*,123.122.121.120)$$then 
$\text{par}(e, 1)= 
(021.132.*.*, 123.122.121.120)$ and 
$\text{par}(e, 2)=(021.132.145.*, 123.122.121.*)$. 
Denote the 
generalization relation by $\prec$; for example, 
$$(021.132.145.*, 123.122.121.120) \prec (021.132.*.*, 
123.122.*.*).$$ 
Define $p \preceq q$ by $(p \prec q) \vee (p=q)$.  The generalization relation defines a 
lattice structure in the obvious manner. We overload our notation to define the sublattice of a 
\textit{set} of elements $P$ as $(e \preceq P) \Longleftrightarrow \exists p \in P$ such that $e 
\preceq P$. Let $H$ denote the total number of nodes in the lattice: $H = \prod_{i=1}^d (h_i+1)$. 

 We call an element \textit{fully specified}, if it is not the 
generalization of any other element, e.g. $021.132.145.163$ is fully specified. We 
call an element fully general in dimension $i$ if $\text{par}(e, i)$ does not exist. We refer to the 
unique element that is fully general in all dimensions as the \textit{root}. For ease of reference, we 
 label each element in the lattice with a vector of length $d$, whose $i$'th entry is at most 
$h_i$, to indicate which lattice node the element belongs to, with the vector corresponding to each 
fully specified element having $i$'th entry equal to $h_i$, and the vector corresponding to the root 
having all entries equal to 0. For example, the element $(021.132.145.*, 123.122.121.120)$ is assigned 
vector $(3, 4)$, and $(021.*.*.*, 123.122.121.*)$ is assigned vector $(1, 3)$. We define $\text{Level}(i)$ of 
the lattice to be the set of labels for which the sum of the entries in the label equals $i$. We  
overload terminology and refer to an element $p$ as a member of $\text{Level}(i)$ if the label assigned to $p$ 
is in $\text{Level}(i)$. Let $L=\sum_{i=1}^dh_i$ denote the deepest level in the hierarchy, that of the fully 
specified elements.

\begin{definition} \label{def:hhs} (Heavy Hitters) Given a multiset $S$ of size $N$ and a threshold $\phi$, a \emph{Heavy Hitter (HH)} is an element whose frequency in $S$ is no smaller than $\phi N$. Let $f(e)$ denote the frequency of each element $e$ in $S$. The set of heavy hitters is  $HH= \{e : f(e) \geq \phi N \}$.
\end{definition}

{From} here on, we assume we are given a multiset $S$ of (fully-specified) elements from a (possibly multidimensional) hierarchical domain of depth $L$, and a threshold $\phi$.

\begin{definition} \label{def:unconditioned} (Unconditioned count) Given a prefix $p$, define the \emph{unconditioned count} of $p$ as $f(p)=\sum_{e \in S \wedge e \preceq p} f(e)$. \end{definition} 

\noindent The exact HHHs are defined inductively as the 
set of prefixes whose \emph{conditioned count} 
exceeds $\phi N$, where the conditioned count is the sum of all 
descendant nodes that are neither HHHs themselves nor the descendant
of an HHH.  Formally:  

\begin{definition} \label{def:hhhs} (Exact HHHs) The set of exact \emph{Hierarchical Heavy Hitters} are defined inductively.

\begin{enumerate}
\item  $\mathcal{HHH}_L$, the hierarchical heavy hitters at level $L$, are the heavy hitters of $S$, that is the fully specified elements whose frequencies exceed $\phi N$. 

\item   
Given a prefix $p$ from $\text{Level}(l)$, $0 \leq l < L$, define
$\mathcal{HHH}^p_{l+1}$ to be the set $\{h \in \mathcal{HHH}_{l+1} \wedge h
\prec p\}$ i.e. $\mathcal{HHH}^p_{l+1}$ is the set of descendants of $p$
that have been identified as HHHs. Define the \emph{conditioned count}
of $p$ to be $F_p=\sum_{ (e \in S) \wedge (e \preceq p) \wedge (e
  \not\preceq \mathcal{HHH}^p_{l+1}) } f(e)$. The set
$\mathcal{HHH}_{l}$ is defined as $$\mathcal{HHH}_{l} = \mathcal{HHH}_{l+1} \cup \{p : (p
\in \text{Level}(l) \wedge (F_p \geq \phi N)\}.$$

\item   The set of exact Hierarchical Heavy Hitters $\mathcal{HHH}$ is defined as the set $\mathcal{HHH}_0$.

\end{enumerate} 
\end{definition}

Figure \ref{fig:twodexample} displays the exact HHHs for a two-dimensional hierarchy defined over an example stream. 

\begin{figure}[t]
\centering
\includegraphics[width=\figwidthes]{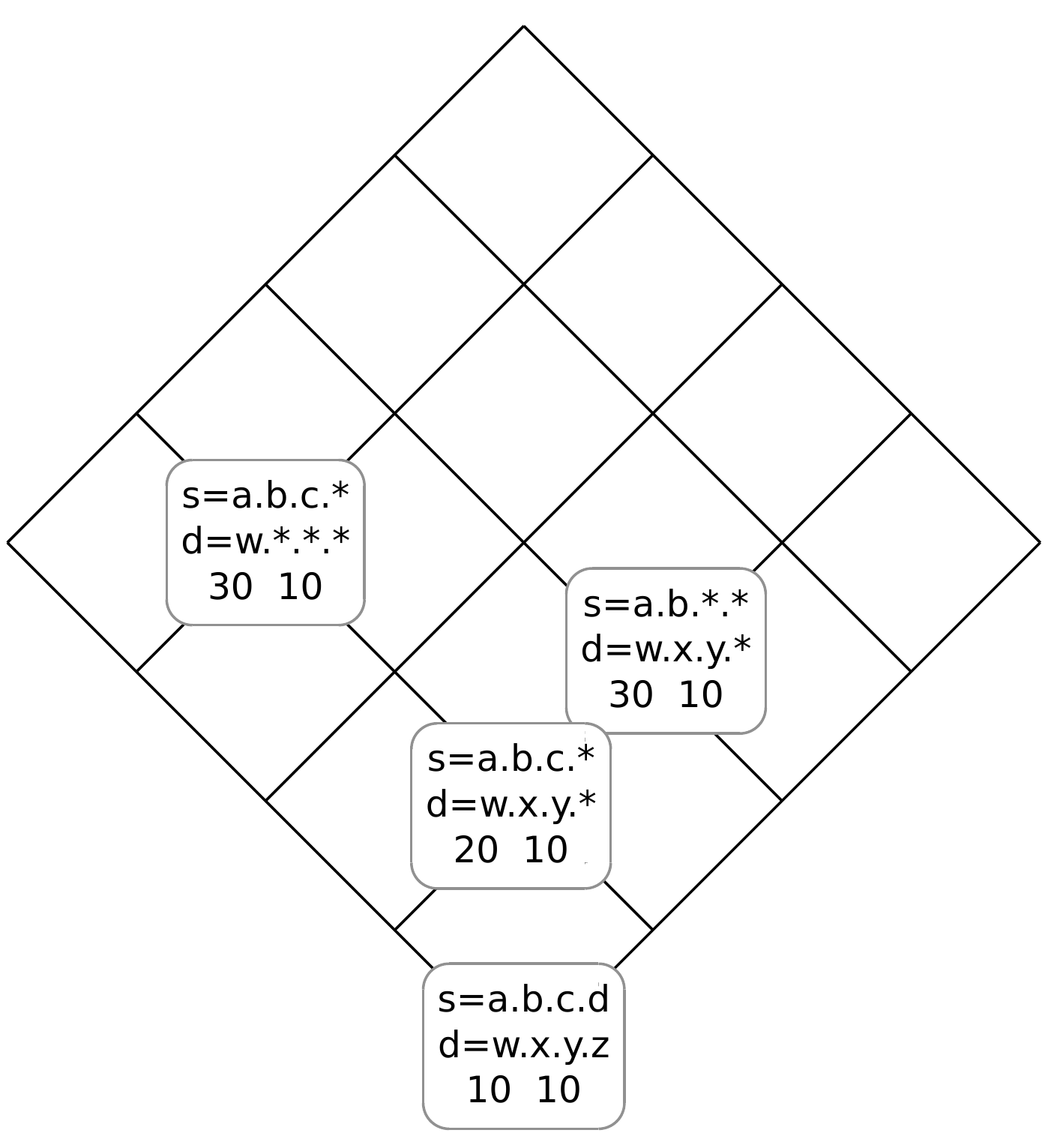}
\vspace{-2mm}
\caption{Example depicting exact HHHs for a two-dimensional stream of IP addresses at byte-wise granularity, using the
threshold $\phi N = 10$. The exact HHHs consist of \emph{ordered pairs} of source-destination IP-address prefixes (s denotes
source and d denotes destination). 
Unconditioned counts of each HHH are on the left, and conditioned counts for each HHH are on the right.
The stream consists of ten repetitions of the item
$(a.b.c.d,w.x.y.z)$, followed by one instance each of items $(a.b.c.i,w.x.y.i),
(a.b.i.d,w.x.y.i)$, and $(a.b.c.i,w.i.y.z)$ for all $i$ in the range 0 to
9. Here $a, b, c, d, w, x, y$, and $z$ represent some distinct integers
between 10 and 255.
}
\label{fig:twodexample}
\hspace*{-6mm}
\end{figure}

 
Finding the set of hierarchical heavy hitters and estimating their frequencies requires linear space to solve exactly, which is prohibitive. Indeed, even finding the set of heavy hitters requires linear space \cite{muthu}, and the hierarchical problem is even more general. For this reason, we study the \textit{approximate} HHH problem.

 
\begin{definition} \label{def:app} (Approximate HHHs) Given parameter $\epsilon$, the \emph{Approximate Hierarchical Heavy Hitters problem} with threshold $\phi$ is to output a set of items $P$ from the lattice, and lower and upper bounds $f_{\text{min}}(p)$ and $f_{\text{max}}(p)$, such that they satisfy two properties, as follows.

\begin{enumerate}
\item   \textit{Accuracy.} $f_{\text{min}}(p) \leq  f(p) \leq f_{\text{max}}(p)$, and $f_{\text{max}}(p)-f_{\text{min}}(p) \leq \epsilon N$ for all $p \in P$.

\item   \textit{Coverage.} For all prefixes $p$, define $P_p$ to be the set $\{q \in P: q \prec p\}$. Define the conditioned count of $p$ with respect to $P$ to be $F_p=\sum_{(e \in S) \wedge (e \preceq p) \wedge (e \not\preceq P_p)} f(e).$ We require for all prefixes $p \notin P$, $F_p < \phi N$. 
\end{enumerate}

\end{definition}

Intuitively, the Approximate HHH problem requires outputting a set $P$ such that 
no prefix with large conditioned count (with respect to $P$) is omitted, along with accurate 
estimates for the \emph{unconditioned} counts of prefixes in $P$. One might consider it natural
to require accurate estimates of the \emph{conditioned} counts of each $p \in P$ as well, but as
shown in \cite{pods}, $\Omega(1/\phi^{d+1})$ space would be necessary if
we required equally accurate estimates for the conditioned
counts, and this can be excessively large in practice.

\eat{
{\bf MM:  Some English here describing the definition in ``plain words'' might be helpful.
In particular, you say here in the definition you're using the conditioned count;  below
in related work you say this definition uses the unconditioned frequencies.  That's confusing.  In
fact, I'm moving that statement up here, where it belongs better.}
}

\subsection{Our Algorithm, Sketched}

Our algorithm utilizes the Space Saving algorithm, proposed by
Metwally et al. \cite{2005} as a subroutine, so we briefly describe
it and some of its relevant properties.  As mentioned, Space Saving
takes as input a stream of pairs $(i,c)$, where $i$ is an item and $c
> 0$ is a frequency increment for that item. It tracks a small subset $T$
of items from the stream with a counter for each $i \in T$.  If the
next item $i$ in the stream is in $T$, its counter is updated
appropriately.  Otherwise, the item with the smallest counter in $T$
is removed and replaced by $i$, and the counter for $i$ is set to the
counter value of the item replaced, plus $c$.
We now describe guarantees of Space Saving from \cite{2009}.

Let $N$
be the sum of all frequencies of items in the stream, let $m$ be the
number of counters maintained by Space Saving, and for any $j < m$,
let $N^{\text{res}(j)}$ denote the sum of all but the top $j$
frequencies. Berinde et al. \cite{2009} showed that for any $j <
m$, \begin{equation} \label{eq:ss} \forall i \left| f(i)-\hat{f}(i) \right| \leq
  \frac{N^{\text{res}(j)}}{m-j},\end{equation} where $\hat{f}(i)$ and $f(i)$ are the
  estimated and true
  frequencies of item $i$, respectively. By setting $j=0$,
this implies that $|f_i-\hat{f}_i|\leq \frac{N}{m}$, so only
$\frac{1}{\epsilon}$ counters are needed to ensure error at most
$\epsilon N$ in any estimated frequency. For frequency distributions
whose ``tails" fall off sufficiently quickly, Space Saving provably requires
$o(\frac{1}{\epsilon})$ space to ensure error at most
$\epsilon N$ (see \cite{2009} for more details).

Using a suitable min-heap based implementation of Space Saving,
insertions take $O(\log m)$ time, and lookups require $O(1)$ time
under arbitrary positive counter updates. When all updates are unitary
(of the form $c=1$), both insertions and lookups can be processed in
$O(1)$ time using the \emph{Stream Summary} data structure
\cite{2005}.

Our algorithm for HHH problems is conceptually simple: it keeps one
instance of a Heavy Hitter algorithm at each node in the lattice, and
for every update $e$ we compute all generalizations of $e$ and insert
each one separately into a different Heavy Hitter data structure. When
determining which prefixes to output as approximate HHHs, we start at
the bottom level of the lattice and work towards the top, using the
inclusion-exclusion principle to obtain estimates for the
\emph{conditioned} counts of each prefix. We output any prefix whose
estimated conditioned count exceeds the threshold $\phi N$.

We mention that the ideas underlying our algorithm have been implicit
in earlier work on HHHs, but have apparently been considered
impractical or otherwise inferior to more complicated
approaches. Notably, \cite{journal} briefly proposes an algorithm
similar to ours based on \emph{sketches}.
Their algorithm can handle deletions as well as insertions, but it requires more space and has
significantly less efficient output and insertion procedures. Significantly, this algorithm is only 
mentioned in \cite{journal} as an extension, and is not studied experimentally.
 An algorithm similar to
ours is also briefly described in \cite{pods} to show the asymptotic
tightness of a lower bound argument.  Interestingly, they clearly
state their algorithm is not meant to be practical. Finally, \cite{seminal}
describes a procedure similar to our one-dimensional algorithm, but concludes that
it is both slower and less space efficient than other algorithms. We therefore
consider one of our primary contributions to be the identification of
our approach as not only practical, but in fact superior in many
respects to previous more complicated approaches.

We chose the Space Saving algorithm \cite{2005} as our Heavy Hitter algorithm. In contrast, the algorithms of \cite{conference, journal} are conceptually based on the Lossy Counting Heavy Hitter algorithm \cite{lossycount}. A number of the advantages enjoyed by our algorithm can be traced directly to our choice of Space Saving over Lossy Counting, but not all. For example, the one-dimensional HHH algorithm of \cite{china} is also based on Space Saving, yet our algorithm has better space guarantees.

\section{One-Dimensional Hierarchies}
We now provide pseudocode for our algorithm in the one-dimensional case, which is much simpler than the case of arbitrary dimension.  As discussed, we use the Space Saving algorithm at each node of the
hierarchy, updating all appropriate nodes for each stream element, and then conservatively estimate conditioned counts to determine an appropriate output set.  

\vspace{-1mm}

\begin{codebox}
\Procname{$\proc{InitializeHHH}()$}
\li Initialize an instance $SS(n)$ of Space Saving with $\epsilon^{-1}$\\ counters at each node $n$ of the hierarchy.
\end{codebox}

\begin{codebox}
\li /*Line 4 tells the $n$'th instance of Space Saving\\to process $c$ insertions of prefix $p$*/
\Procname{$\proc{InsertHHH}$(element $e$, count $c$)}
\li \For all $p$ such that $e \preceq p$
\li 	\Do
			Let $n$ be the lattice node that $p$ belongs to
\li			UpdateSS($SS(n)$, $p$, $c$)
\end{codebox}
\vspace{-2mm}

\begin{codebox}
\Procname{$\proc{OutputHHH1D}$(threshold $\phi$)}
\li /* $\text{par}(e)$ is parent of $e$*/
\li Let $s_e=0$ for all $e$ 
\li /*$s_e$ conservatively estimates the difference\\
 between unconditioned and conditioned counts of $e$*/
\li \For each $e$ in postorder 
\li 	\Do
			$(f_{\text{min}}(e), f_{\text{max}}(e)) \gets$ GetEstimateSS($SS(n)$, $e$)
\li			\If $f_{\text{max}}(e) - s_e \geq \phi N$  
\li				\Do
				print($e$, $f_{\text{min}}(e)$, $f_{\text{max}}(e)$)
\li				$s_{\text{par}(e)} += f_{\text{min}}(e)$
\li			\Else
				$s_{\text{par}(e)} += s_e$				
\end{codebox}

Figure \ref{fig:onedalg} illustrates an execution of our one-dimensional algorithm on a stream of IP addresses at byte-wise granularity. 

\begin{figure}[!h]
\centering
\vspace{-2mm}
\includegraphics[width=\figwidthes]{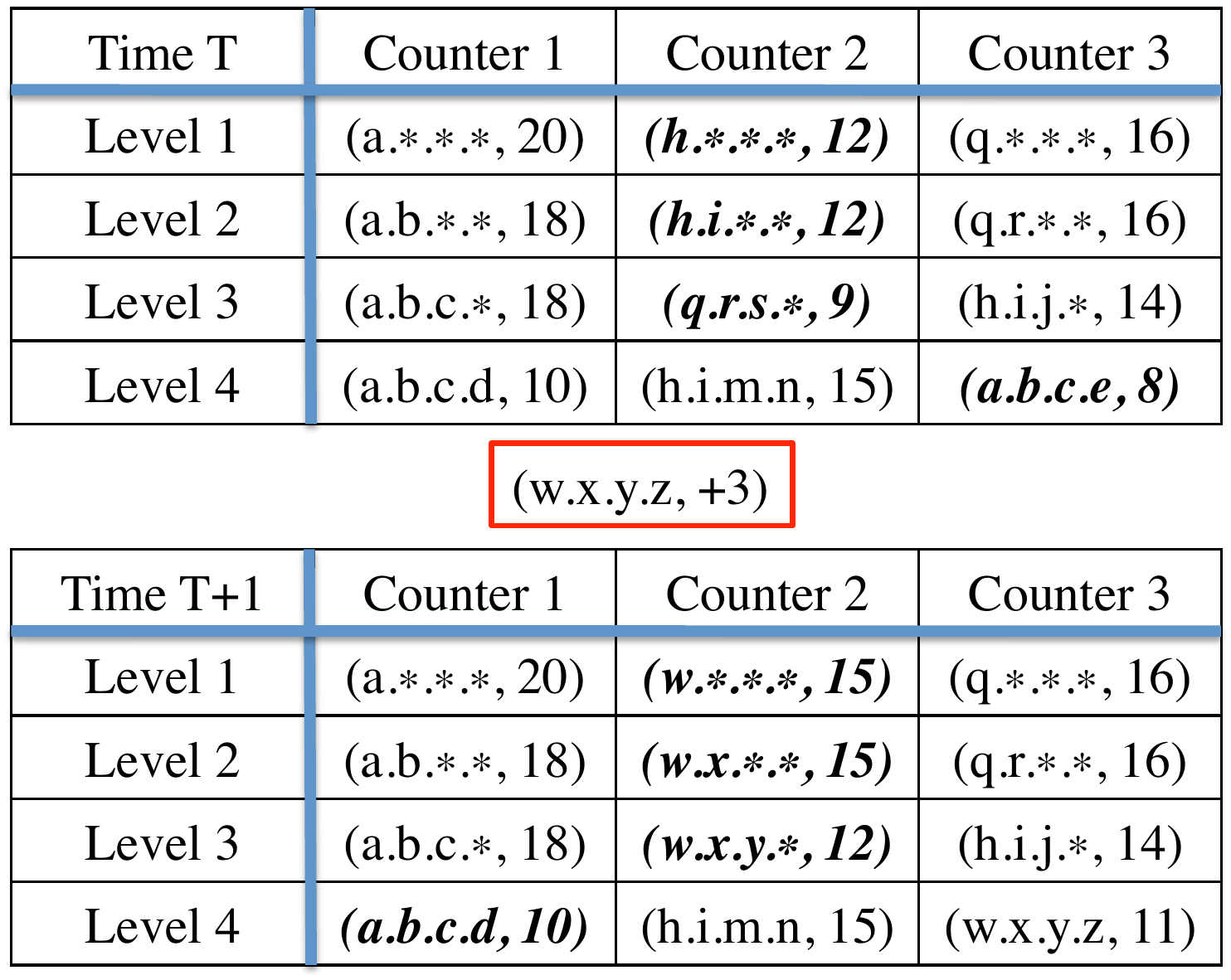}
\vspace{-2mm}
\caption{Example depicting our one-dimensional algorithm on a stream of IP addresses at byte-wise granularity, where each instance
of Space Saving maintains 3 counters. The top grid depicts the state at time $T$, and the bottom grid 
depicts the state at time $T+1$, after processing the update $(w.x.y.z, +3)$. The minimum counter for each instance of Space Saving
is boldfaced and italicized. If OutputHHH1D is run at time $T+1$ with threshold $\phi N = 12$, the approximate HHHs output would be
$h.i.m.n$, $w.x.y.*$, $h.i.j.*$, $a.b.c.*$, and $q.r.*.*$.
}
\label{fig:onedalg}
\hspace*{-6mm}
\vspace{-2mm}
\end{figure}

The following lemma is useful in proving that our one-dimensional algorithm satisfies various nice properties.  

\begin{lemma} \label{lem1} Define $H_p \subseteq P$ as the set $\{h: h\in P$, $h \prec p$, $\nexists h' \in P: h \prec h' \prec p\}$. Then in one dimension, $F_p=
f(p) - \sum_{h \in H_p} f(h)$. \end{lemma}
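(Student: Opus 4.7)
The plan is to unwind the definitions and reduce the claim to a disjointness statement that holds precisely because the one-dimensional hierarchy is a tree. Starting from
\[
F_p = \sum_{e \in S,\; e \preceq p,\; e \not\preceq P_p} f(e) = f(p) - \sum_{e \in S,\; e \preceq p,\; e \preceq P_p} f(e),
\]
the task is to show that the right-hand sum equals $\sum_{h \in H_p} f(h)$.

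First I would observe that $H_p$ and $P_p$ define the same downward closure: by construction, every $q \in P_p$ is a descendant of some maximal element $h \in H_p$ (possibly $q=h$), so $e \preceq P_p$ if and only if $e \preceq h$ for some $h \in H_p$. Hence
\[
\{e \in S : e \preceq p,\; e \preceq P_p\} = \bigcup_{h \in H_p} \{e \in S : e \preceq h\},
\]
using also that every $h \in H_p$ satisfies $h \prec p$, so the bound $e \preceq p$ is automatic from $e \preceq h$.

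The key step, and the one that uses the one-dimensional assumption essentially, is to argue that this union is disjoint. In one dimension the lattice is a rooted tree, so any two distinct maximal elements $h, h' \in H_p$ are incomparable; in a tree, incomparable nodes have disjoint sets of descendants, for otherwise a common descendant would force one of $h, h'$ to be an ancestor of the other. (This is exactly where the argument fails in higher dimensions: in a product lattice, two incomparable prefixes may share descendants, which is why the multidimensional algorithm later needs inclusion–exclusion.) Given disjointness, $\sum_{h \in H_p} \sum_{e \preceq h} f(e) = \sum_{h \in H_p} f(h)$ by Definition~\ref{def:unconditioned}, and substituting back yields $F_p = f(p) - \sum_{h \in H_p} f(h)$, as desired.

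The only real obstacle is the disjointness claim; everything else is bookkeeping on the two definitions. I would make sure to state cleanly why maximality of the elements of $H_p$ inside $P_p$, combined with the tree structure, rules out shared descendants, and flag that this is the step that does not generalize to $d > 1$.
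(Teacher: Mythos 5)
Your proposal is correct and follows essentially the same route as the paper: rewrite $F_p$ as $f(p)$ minus the mass below $P_p$, then use the fact that in one dimension each $e \preceq P_p$ lies below exactly one maximal element $h \in H_p$ (your two steps — that $H_p$ and $P_p$ have the same downward closure, and that distinct maximal elements have disjoint descendant sets in a tree — are precisely the ``at least one'' and ``at most one'' halves of the paper's single claim). Your explicit remark about where the argument breaks for $d>1$ is a nice touch but does not change the substance.
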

\begin{proof} By Definition \ref{def:app}, $$F_p=\sum_{(e \in S) \wedge (e \preceq p) \wedge (e \not\preceq P_p)} f(e) = f(p)-\sum_{(e \in S) \wedge (e \preceq P_p)}f(e).$$ Since the hierarchy is one-dimensional, for each
$e \in S$ such that $e \preceq P_p$, there is exactly one $h \in H_p$ such that $e \preceq h$ (otherwise, there would be $h \neq h'$ in $H_p$ such that $h \prec h'$). Thus, \begin{align*} f(p)-\sum_{(e \in S) \wedge (e \preceq P_p)}f(e) &=f(p) - \sum_{h \in H_p} \sum_{(e \in S) \wedge (e \preceq h)} f(h)\\
&= f(p) - \sum_{h \in H_p} f(h). \end{align*} \smartqed \end{proof}

\begin{theorem} \label{thm1} Using $O(\frac{H}{\epsilon})$ space, our one-dimensional algorithm satisfies the Accuracy and Coverage requirements of Definition \ref{def:app}. 
\end{theorem}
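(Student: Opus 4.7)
The space bound is immediate: the algorithm maintains one Space Saving instance per lattice node, each using $\epsilon^{-1}$ counters, giving $O(H/\epsilon)$ total. For Accuracy, I would invoke the Space Saving guarantee of Equation~(1) with $j=0$: with $m = \epsilon^{-1}$ counters the estimate satisfies $|f(p) - \hat{f}(p)| \leq \epsilon N$. Since Space Saving always returns an overestimate of the true count, setting $f_{\text{max}}(p) = \hat{f}(p)$ and $f_{\text{min}}(p) = \hat{f}(p) - m_{\min}$ (where $m_{\min} \leq \epsilon N$ is the minimum counter value in the relevant instance) gives $f_{\text{min}}(p) \leq f(p) \leq f_{\text{max}}(p)$ and $f_{\text{max}}(p) - f_{\text{min}}(p) \leq \epsilon N$.

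The real work is Coverage, and the key step is to establish by induction on the height of $p$ in the hierarchy the following invariant: at the moment the post-order loop processes $p$ (i.e., after all descendants of $p$ have been processed), the accumulator satisfies $s_p = \sum_{h \in H_p} f_{\text{min}}(h)$, with $H_p$ as defined in Lemma~\ref{lem1}. In the inductive step I would partition $H_p$ according to which child $c$ of $p$ the element lies below. If $c \in P$, then $c$ itself is the unique element of $H_p$ inside that subtree and contributes $f_{\text{min}}(c)$ to $s_p$; if $c \notin P$, then $H_p$ restricted to the subtree of $c$ coincides with $H_c$, and by the inductive hypothesis the quantity added to $s_p$ is $s_c = \sum_{h \in H_c} f_{\text{min}}(h)$. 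These two cases line up precisely with the conditional update of $s_{\text{par}(e)}$ in the algorithm.

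Given the invariant, Coverage is a short chain of inequalities. A prefix $p \notin P$ must have failed the output test, so $f_{\text{max}}(p) - s_p < \phi N$. Combining Lemma~\ref{lem1} with $f_{\text{max}}(p) \geq f(p)$ and $f_{\text{min}}(h) \leq f(h)$ for each $h \in H_p$, we obtain
\[
F_p \;=\; f(p) - \sum_{h \in H_p} f(h) \;\leq\; f_{\text{max}}(p) - \sum_{h \in H_p} f_{\text{min}}(h) \;=\; f_{\text{max}}(p) - s_p \;<\; \phi N.
\]
The main (modest) obstacle is cleanly formulating and verifying the $s_p$ invariant against the post-order traversal; once it is in place, Accuracy follows from Space Saving off-the-shelf and Coverage reduces to the single substitution above.
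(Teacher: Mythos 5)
Your proposal is correct and follows essentially the same route as the paper: Accuracy and the space bound come directly from the Space Saving guarantee of Equation~(1) with $m=\epsilon^{-1}$, and Coverage rests on the same inductive invariant $s_p = \sum_{h \in H_p} f_{\text{min}}(h)$ (proved by the same partition of $H_p$ over the children of $p$ according to membership in $P$), combined with Lemma~\ref{lem1} to get $F_p \leq f_{\text{max}}(p) - s_p$. The only difference is that you spell out slightly more detail on how $f_{\text{min}}$ and $f_{\text{max}}$ are extracted from a Space Saving counter, which the paper leaves implicit.
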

\begin{proof}			
By Equation \ref{eq:ss}, each instance of Space Saving requires $\frac{1}{\epsilon}$ counters, corresponding to O$(\frac{1}{\epsilon})$ space, in order to estimate the unconditioned frequency of each item assigned to it within additive error $\epsilon N$. Consequently, the Accuracy requirement is satisfied using $O(\frac{H}{\epsilon})$ space in total.

To prove coverage, we first show by induction that $s_p = \sum_{h \in H_p} f_{\text{min}}(h)$. This is true at level $L$ because in this case $s_p=0$ and $H_p$ is empty. Suppose the claim is true for all prefixes at level $k$. Then for  $p$ at level $k-1$,

\begin{align*} s_p&=\sum_{q \in \text{child}(p) \wedge q \in P} f_{\text{min}}(q) + \sum_{q \in \text{child}(p) \wedge q \notin P} s_q\\
&= \sum_{q \in \text{child}(p) \wedge q \in P} f_{\text{min}}(q) + \sum_{q \in \text{child}(p) \wedge q \notin P} \sum_{h \in H_q} f_{\text{min}}(h)\\
&=  \sum_{h \in H_p} f_{\text{min}}(h),\end{align*}
where the first equality holds by inspection of Lines 5-9 of the output procedure, and the second equality holds by the inductive hypothesis. This completes the induction.

By Lemma \ref{lem1}, $F_p= f(p) - \sum_{h \in H_p}f(h)$
$$ \leq f_{\text{max}}(p) - \sum_{h \in H_p}f_{\text{min}} (h)= f_{\text{max}}(p)-s_p,$$ where the inequality holds by the Accuracy guarantees. Coverage follows, since our algorithm is conservative. That is, if item $p$ is not output, then from Line 6 of the output procedure we have $f_{\text{max}}(p) - s_p \leq \phi N$, and we've shown  $F_p \leq f_{\text{max}}(p) - s_p$.
\end{proof}

We remark that under realistic assumptions on the data distribution, our algorithm satisfies the Accuracy and Coverage requirements using 
space $o(\frac{H}{\epsilon})$. Specifically, \cite[Theorem 8]{2009} shows that, if the tail of the frequency distribution (i.e. the quantity $N^{\text{res}(k)}$ for a certain value of $k$) is bounded by that of the Zipfian distribution with parameter $\alpha$, then Space Saving requires space $O(\epsilon^{-\frac{1}{\alpha}})$ to estimate all frequencies within error $\epsilon N$. Notice that if the frequency distribution of the stream itself satisfies this ``bounded-tail" condition, then the frequency distributions at higher levels of the hierarchy do as well.
Hence our algorithm requires only space $O(H \epsilon^{-\frac{1}{\alpha}})$ if the tail of the stream is bounded by that of a Zipfian distribution with 
parameter $\alpha$.

\begin{theorem} \label{thm2} Our one-dimensional algorithm performs each update operation in time $O(H\log{\frac{1}{\epsilon}})$ in the case of arbitrary updates, and $O(H)$ time in the case of unitary updates. Each output operation takes time $O(\frac{H}{\epsilon})$.\end{theorem}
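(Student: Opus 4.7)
The plan is to bound the time of each operation by directly tallying the work done in the pseudocode and invoking the Space Saving primitive costs already cited in the paper.

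For \proc{InsertHHH}$(e,c)$, I would first count the iterations of the \textbf{for} loop. In a one-dimensional hierarchy, the set $\{p : e \preceq p\}$ is precisely the chain of generalizations of $e$ from its own level up to the root, which has exactly $H$ elements (the total number of lattice nodes). Each iteration performs one \proc{UpdateSS} call on the corresponding Space Saving instance, which maintains $m = 1/\epsilon$ counters. Using the min-heap implementation, \proc{UpdateSS} costs $O(\log m) = O(\log(1/\epsilon))$ per call, which multiplies to the claimed $O(H \log (1/\epsilon))$ bound. For unitary updates I would swap in the Stream Summary structure from \cite{2005}, whose $O(1)$ worst-case update time yields $O(H)$ overall.

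For \proc{OutputHHH1D}, I would first observe that the total number of prefixes currently tracked across all $H$ Space Saving instances is at most $H \cdot (1/\epsilon) = O(H/\epsilon)$; every other prefix $p$ has $f_{\max}(p) = 0$ (or may safely be skipped with $s_p = 0$). Thus the postorder loop in Lines 4--9 runs over $O(H/\epsilon)$ items. Per iteration, \proc{GetEstimateSS} is $O(1)$ by the cited Space Saving guarantees, the comparison and possible print on Lines 6--7 are $O(1)$, and the update to $s_{\text{par}(e)}$ on Line 8 or 9 is $O(1)$ provided the $s_{\cdot}$ values are stored in a hash table keyed by prefix, so that initialization to $0$ is lazy. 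Summing gives the claimed $O(H/\epsilon)$ bound.

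The main obstacle is justifying that the postorder traversal itself can be carried out in $O(H/\epsilon)$ time, including correctly accumulating $s_p$ for prefixes $p$ that are ancestors of some tracked element but are not themselves tracked. I would handle this by enumerating all tracked prefixes, bucketing them by hierarchy level (possible in $O(H/\epsilon)$ time since levels are integers in $[0,L]$ and $L \le H$), and processing buckets in decreasing level order so that every parent access happens after all of its children; the hash table on $s_{\cdot}$ then makes the per-prefix bookkeeping $O(1)$ amortized, closing the argument.
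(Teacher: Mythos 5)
Your argument is essentially the paper's: the insert bound comes from $H$ Space Saving updates at $O(\log(1/\epsilon))$ (heap) or $O(1)$ (Stream Summary) each, and the output bound comes from restricting the traversal to the at most $H/\epsilon$ prefixes actually tracked, with $O(1)$ work per prefix. The only (harmless) imprecision is your claim that untracked prefixes have $f_{\max}(p)=0$; the paper's justification is that $f_{\max}(p)\leq \epsilon N < \phi N$ for such prefixes, so they cannot be output --- the extra detail you give about bucketing by level and lazy hash-table initialization of $s_{\cdot}$ is implementation bookkeeping the paper leaves implicit.
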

\begin{proof} 
 The time bound on insertions is trivial, as an insertion operation requires updating $H$ instances of Space Saving. Each update of Space Saving using a min-heap based implementation for arbitrary updates requires time $O(\log{m})$, where $m=\frac{1}{\epsilon}$ is the number of counters maintained by each instance of Space Saving. For unitary updates, each insertion to Space Saving can be processed in $O(1)$ time using the \emph{Stream Summary} data structure \cite{2005}.

To obtain the time bound on output operations, notice that although the pseudocode for procedure OutputHHH1D indicates that we iterate through every possible prefix $e$, we actually need only iterate over those $e$ tracked by the instance of Space Saving corresponding to $e$'s label.  We may restrict our search to these $e$ because, for any prefix $e$ not tracked by the corresponding Space Saving instance, $f_{\text{max}}(e) \leq \epsilon N < \phi N$, so $e$ cannot be an approximate HHH. There are at most $\frac{H}{\epsilon}$ such $e$'s because each of the $H$ instances of Space Saving maintains only $\frac{1}{\epsilon}$ counters, and for each $e$, the GetEstimateSS call in line 5 and all operations in lines 6-9 require $O(1)$ time. The time bound follows.
\end{proof}

For all prefixes $p$ in the lattice, define the estimated conditioned count of $p$ to be $F'_p := f_{\text{max}}(p)-s_p$. By performing a refined analysis of error propagation, we can bound the number of HHHs output by our one-dimensional algorithm, and use this result to provide Accuracy guarantees on the estimated conditioned counts.

\begin{theorem} \label{thm:conditioned} Let $\epsilon < \frac{\phi}{2}$. The total number of approximate HHHs output by our one-dimensional algorithm is at most $\frac{1}{\phi-2\epsilon}$.  Moreover, the maximum error in the approximate conditioned counts, $F'_p-F_p$, is at most $\frac{1}{\phi-2\epsilon}\epsilon N$. \end{theorem}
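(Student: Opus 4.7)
My plan is to reduce both claims to a single per-prefix inequality relating the estimated conditioned count $F'_p = f_{\max}(p) - s_p$ to the true conditioned count $F_p$. Using the identity $s_p = \sum_{h \in H_p} f_{\min}(h)$ established in the proof of Theorem~\ref{thm1}, the expression $F_p = f(p) - \sum_{h \in H_p} f(h)$ from Lemma~\ref{lem1}, and the Space Saving accuracy bounds $f_{\min}(p) \le f(p) \le f_{\max}(p)$ together with $f_{\max}(p) - f_{\min}(p) \le \epsilon N$, I would show that for every prefix $p$,
\begin{equation*}
0 \;\le\; F'_p - F_p \;\le\; (|H_p|+1)\,\epsilon N.
\end{equation*}
The upper bound follows by writing $F'_p - F_p = (f_{\max}(p) - f(p)) + \sum_{h \in H_p}(f(h) - f_{\min}(h))$ and bounding each term by $\epsilon N$; the lower bound is the same manipulation with the roles of $f_{\min}$ and $f_{\max}$ swapped.

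For the count bound I would sum $F_p \ge F'_p - (|H_p|+1)\epsilon N$ over the output set $P$. Because in one dimension the hierarchy is a tree, each stream element $e$ is captured by $F_p$ for at most one $p \in P$, namely the most specific output ancestor of $e$, which gives $\sum_{p \in P} F_p \le N$. Moreover each $h \in P$ lies in $H_p$ for at most one $p \in P$ (again its nearest output ancestor), giving $\sum_{p \in P}|H_p| \le |P|$. Combining these with $F'_p \ge \phi N$ for every $p \in P$ yields
\begin{equation*}
N \;\ge\; \sum_{p \in P} F_p \;\ge\; \sum_{p \in P} F'_p \;-\; \epsilon N \sum_{p \in P}(|H_p|+1) \;\ge\; |P|\,(\phi - 2\epsilon)\,N,
\end{equation*}
which rearranges to $|P| \le 1/(\phi - 2\epsilon)$.

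The conditioned-count error bound then drops out by applying the per-prefix upper bound to a single $p$ and crudely invoking $|H_p|+1 \le |P| \le 1/(\phi - 2\epsilon)$ from the first part. I expect the only real subtlety to be careful accounting of error contributions in the summation: one factor of $\epsilon N$ arises from the Space Saving error on $f(p)$ itself, and a second aggregate factor arises from the accumulated errors on each $f(h)$ in the subtraction, giving the $|H_p|+1$ coefficient; both must be paid, and it is precisely this that produces the denominator $\phi - 2\epsilon$ rather than $\phi - \epsilon$. The tree structure used to obtain $\sum_{p \in P} F_p \le N$ is also crucial, and is the reason the argument is confined to one dimension.
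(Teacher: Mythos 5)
Your plan is correct and follows essentially the same route as the paper's proof: the same decomposition $F'_p - F_p = (f_{\max}(p)-f(p)) + \sum_{h \in H_p}(f(h)-f_{\min}(h))$, the same two key counting observations (each $h \in P$ lies in $H_p$ for at most one $p \in P$, and $\sum_{p \in P} F_p \le N$ via the tree structure), and the same final bound $F'_p - F_p \le |P|\epsilon N \le \frac{\epsilon}{\phi-2\epsilon}N$. No gaps.
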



\begin{proof} 
We first sketch why not too many approximate HHHs are output. A prefix $p$ is output if and only if $F'_p > \phi N$, and $F'_p \geq F_p$. The key observation is that for each approximate HHH $h \in P$ output by our algorithm, $h$ ``contributes" error at most $\epsilon N$ to the estimated conditioned count $F'_p$ of \textit{at most one} ancestor $p \in P$ of $h$. Therefore, the total error in the approximate conditioned counts of the output set $P$ is small. Consequently, the sum of the \textit{true} conditioned  counts $F_p$ of all $p \in P$ is very close to $\phi N |P|$, implying that $|P|$ cannot be much larger than $\frac{N}{\phi}$ since the stream has length $N$.

We make this argument precise. We showed in proving Theorem \ref{thm1} that  for all $p$, $s_p = \sum_{h \in H_p} f_{\text{min}}(h)$, so \begin{equation} \label{eqn:one} F'_p = f_{\text{max}}(p)-s_p = f_{\text{max}}(p) -\sum_{h \in H_p} f_{\text{min}}(h).\end{equation} 

Combining Lemma \ref{lem1} and Equation \ref{eqn:one}, we see that 
$$F'_p-F_p = \big(f_{\text{max}}(p)-\sum_{h \in H_p} f_{\text{min}}(h) \big) - \big(f(p)-\sum_{h \in H_p} f(h) \big)= $$
 $$  \big(f_{\text{max}}(p) - f(p) \big) + \big(\sum_{h \in H_p}  f(h) - f_{\text{min}}(h) \big) .$$
 
To show that the sum of the \textit{true} conditioned counts $F_p$ of all $p \in P$ is very close to $\phi N |P|$, we use
$$\sum_{p \in P} F_p = \sum_{p \in P} F'_p - \sum_{p \in P} (F'_p - F_p) $$
$$\geq |P| \phi N - \sum_{p \in P} \big(f_{\text{max}}(p) - f(p) \big) - \sum_{p \in P} \big(\sum_{h \in H_p}  f(h) - f_{\text{min}}(h) \big).$$

By the Accuracy guarantees, $\sum_{p \in P} \big(f_{\text{max}}(p) - f(p) \big)$ is at most $|P| \epsilon N$.  To bound $\sum_{p \in P} \big(\sum_{h \in H_p}  f(h) - f_{\text{min}}(h) \big)$, we observe that for any item $h \in P$, $h \in H_p$ for at most one ancestor $p \in P$ (because in one dimension, if $h \prec p$ and $h \prec p'$ for distinct $p, p' \in P$, then either $p \prec p'$ or $p' \prec p$, contradicting the fact that $h \in H_p$ and $h \in H_{p'}$). Combining this fact with the Accuracy guarantees, we again obtain an upper bound of $|P| \epsilon N$. 
In summary, we have shown that $$\sum_{p \in P} F_p \geq |P| \phi N - 2 \epsilon |P| N = |P| (\phi - 2 \epsilon) N.$$ Since the total length of the stream is $N$, and in one dimension each fully specified item contributes its count to $F_p$ for at most one $p$,  it follows that $\sum_{p \in P} F_p \leq N$ and hence $|P| \leq \frac{1}{\phi-2 \epsilon}$ as claimed.


Lastly, we bound the maximum error $F'_p-F_p$ in any estimated conditioned  count. We showed that $$F'_p-F_p = \big(f_{\text{max}}(p) - f(p) \big) + \big(\sum_{h \in H_p}  f(h) - f_{\text{min}}(h) \big),$$ which, by the Accuracy guarantees, is at most $\epsilon N + |H_p| \epsilon N \leq \epsilon N + (|P| - 1) \epsilon N \leq \frac{\epsilon}{\phi-2 \epsilon} N$, as claimed.
\end{proof}

The upper bound on output size provided in Theorem \ref{thm:conditioned} is very nearly tight, as there may be $\frac{1}{\phi}$ exact heavy hitters. For example, with realistic values of $\phi=.01$ and $\epsilon=.001$, Theorem 3 yields an upper bound of 102.

\section{Two-Dimensional Hierarchies}

In moving from one to multiple dimensions, only the output procedure must change. In one dimension, discounting items that were already output as HHHs was simple. There was no double-counting involved, since no two children of an item $p$ had common descendants. To deal with the double-counting, we use the principle of inclusion-exclusion in a manner similar to \cite{journal} and \cite{conference}. 

At a high level, our two-dimensional output procedure works as follows. As before, we start at the bottom of the lattice, and compute HHHs one level at a time. For any node $p$, we have to estimate the conditioned count for $p$ by discounting the counts of items that are already output as HHHs. However, Lemma \ref{lem1} no longer holds: it is not necessarily true that $F_p =  f_{\text{max}}(p)-\sum_{q \in H_p} f(q)$ in two or more dimensions, because for fully specified items that have two or more ancestors $H_p$, we have subtracted their count multiple times. Our algorithm compensates by adding these counts back into the sum. 

Before formally presenting our two-dimensional algorithm, we need the following theorem. Let $\text{glb}(h, h')$ denote the greatest lower bound of $h$ and $h'$, that is, the unique common descendant $q$ of $h$ and $h'$ satisfying $\forall p: (q \preceq p) \wedge (p \preceq h) \wedge (p \preceq h') \Longrightarrow p = q$. In the case where $h$ and $h'$ have \textit{no} common descendants, the we treat $\text{glb}(h, h')$ as the ``trivial item" which has count 0. 

\begin{theorem} \label{thm:multid} In two dimensions, let $T_{p}$ be the set of all $q$  expressible as the greatest lower bound of two distinct elements of $H_p$, but not of $3$ or more distinct elements in $H_p$. 
Then $$F_p=f(p)-\sum_{q \in H_p} f(q) + \sum_{q\in T_{p} } f(q).$$ \end{theorem}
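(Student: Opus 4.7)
The plan is to reduce the claim, via inclusion--exclusion, to a combinatorial identity about the antichain $H_p$, and then verify that identity by exploiting the product-of-chains structure of the two-dimensional lattice.

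First, mimicking the argument of Lemma~\ref{lem1}, I would observe that $e \preceq P_p$ iff $e \preceq H_p$: any $q \in P_p$ with $e \preceq q$ is dominated by some maximal element of $P_p$, and such a maximal element lies in $H_p$ by definition. Thus
\[
F_p \;=\; f(p) - \sum_{e \in S,\, e \preceq H_p} f(e),
\]
so it suffices to show $\sum_{e \in S,\, e \preceq H_p} f(e) = \sum_{q \in H_p} f(q) - \sum_{q \in T_p} f(q)$.

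Next, I would establish the pointwise identity for each fully specified $e$: setting $B(e) := \{h \in H_p : e \preceq h\}$ and $k := |B(e)|$,
\[
[e \preceq H_p] \;=\; |B(e)| \;-\; |\{q \in T_p : e \preceq q\}|.
\]
Multiplying by $f(e)$, summing over $e \in S$, and swapping summations (using $f(q) = \sum_{e \preceq q} f(e)$) will give exactly what I want. To prove the pointwise identity I would use that $H_p$ is an antichain (by maximality in $P_p$) and that in two dimensions a fully specified $e$ has a unique generalization at each coordinate label, so $B(e) = \{h_1,\dots,h_k\}$ may be sorted with first coordinates $a_1 < \cdots < a_k$ and corresponding second coordinates $b_1 > \cdots > b_k$. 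Any $q \in T_p$ with $e \preceq q$ must be $\text{glb}(h,h')$ for some $h,h' \in H_p$ both $\succeq e$, hence both in $B(e)$, so $q = q_{ij} := (a_j,b_i)$ for some $i < j$.

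Finally, I would argue that $q_{ij} \in T_p$ iff $j = i+1$. Any $h_l \in H_p$ with $h_l \succeq q_{ij}$ automatically satisfies $h_l \succeq e$, hence $h_l \in B(e)$; the antichain structure then forces $i \le l \le j$. If $j > i+1$ there are at least three such indices, so $q_{ij}$ is the greatest lower bound of a 3-subset of $H_p$ and is excluded from $T_p$; if $j = i+1$ only the two indices $i, i{+}1$ qualify, so $q_{ij} \in T_p$. The resulting $k-1$ elements $q_{i,i+1} = (a_{i+1},b_i)$ are distinct (their first coordinates differ), matching $[e \preceq H_p]$ in both the $k=0$ and $k \ge 1$ cases. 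The main obstacle is precisely this combinatorial step: verifying that in two dimensions the inclusion--exclusion collapses to pairs only, and further that only pairs of \emph{consecutive} staircase elements survive cancellation in the sense required by the definition of $T_p$. Once this is in hand, the two double-counting swaps immediately yield the claimed formula.
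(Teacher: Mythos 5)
Your proof is correct, and it reaches the identity by a genuinely different bookkeeping route than the paper, although both arguments hinge on the same structural fact. The paper starts from the full inclusion--exclusion expansion of $F_p$ over all subset sizes of $H_p$ (stated without proof), and then kills every term coming from a $u$ that is a common descendant of $m \ge 3$ elements of $H_p$ via the binomial cancellation $\sum_{j}(-1)^{j}\binom{m-2}{j}=0$; the only structural input is that the elements of $H_p$ lying above a fixed point form a ``staircase,'' so that $u$ is the glb of exactly those subsets containing the two extreme elements. You use the identical staircase observation but skip the alternating sum entirely: you prove the pointwise identity $[e\preceq H_p] = |B(e)| - |\{q\in T_p : e\preceq q\}|$ by showing that the elements of $T_p$ above $e$ are precisely the $k-1$ glbs of \emph{consecutive} staircase pairs, and then swap sums. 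This buys you a self-contained argument (no appeal to an unproven general inclusion--exclusion formula), a sharper picture of which pairs actually survive into $T_p$ (only adjacent ones, each seen by exactly two elements of $H_p$), and an immediate verification of the $k=0$ case; the paper's expansion-and-cancellation approach, on the other hand, is the natural template for the higher-dimensional discussion, where one truncates the same alternating series to get the one-sided bound $F_p\le F'_p$. Your handling of the one subtle point --- that a $q_{ij}$ with $j>i+1$ is excluded from $T_p$ because it is also the glb of the triple $\{h_i,h_{i+1},h_j\}\subseteq H_p$, whereas $q_{i,i+1}$ has only two elements of $H_p$ above it and so cannot be a glb of three --- is exactly right and is the step that makes the pairwise-only formula work in two dimensions.
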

The proof appears in Appendix~\ref{app:proof}.
 

Below, we give pseudocode for our two-dimensional output procedure. 
We compute estimated conditioned counts
$F'_p=f_{\text{max}}(p)-\sum_{h_1 \in H_p} f_{\text{min}}(h_1) + \sum_{q \in T_p} f_{\text{max}}(q).$ 
As in the one-dimensional case, the Accuracy guarantees of the algorithm follow
immediately from those of Space Saving.
Coverage requirements are satisfied
by combining Theorem \ref{thm:multid} with the Accuracy guarantees.

Our two-dimensional algorithm performs each insert operation in $O(H \log{\frac{1}{\epsilon}})$ time under arbitrary updates, and $O(H)$ time under unitary updates, just as in the one-dimensional case. Although the output operation is considerably more expensive in the multi-dimensional case, experimental results indicate that this operation is not prohibitive in practice (see Section \ref{sec:experiments}).

  \begin{codebox}
\Procname{$\proc{OutputHHH2D}$(threshold $\phi$)}
\li  $P=\emptyset$
\li \For \textit{level l=L} \Downto 0
 	\Do
\li			\For each item $p$ at level $l$
				\Do
\li					Let $n$ be the lattice node that $p$ belongs to
\li					$(f_{\text{min}}(p), f_{\text{max}}(p))$\!$\gets$\!GetEstimateSS($SS(n)$, $p$)
\li					$F'_p = f_{\text{max}}(p)$
\li					$H_p = \{h\!\in\!P$ such that $\nexists h' \in\!P: h \prec h' \prec p\}$
\li					\For each $h \in H_p$ 
						\Do
\li							$F'_p = F'_p - f_{\text{min}}(h)$
						\End
\li					\For each pair of distinct elements $h, h'$ in $H_p$
						\Do
\li							$q=\text{glb}(h, h')$
\li                      \If $\nexists h_3 \neq h, h'$ in $H_p$ s.t. $q \preceq h_3$
						  \Do
\li							$F'_p = F_p' + f_{\text{max}}(q)$	
						  \End					
						\End
\li					\If $F'_p \geq \phi N$
						\Do
\li							$P = P \cup \{p\}$
\li							print($p$, $f_{\text{min}}(p)$, $f_{\text{max}}(p)$)
		
\end{codebox}

Using Theorem \ref{thm:multid}, we obtain a non-trivial upper bound on the number of HHHs output by our two-dimensional algorithm. The proof is in Appendix \ref{app:proof}.

\begin{theorem} \label{thm:multidbound} Let $A = 1+\min(h_1, h_2)$, where $h_i$ is the depth of dimension $i$ of the lattice. For small enough $\epsilon$, the number of approximate HHHs output by our two-dimensional algorithm is at most 
$$\frac{2}{A\epsilon}\left(\phi - (1 + A) \epsilon - \sqrt{(\phi - (1 + A) \epsilon)^2 - A^2\epsilon}\right).$$
\end{theorem}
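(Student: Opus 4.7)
My plan is to set up a quadratic inequality in $k := |P|$ and then extract the claimed bound via the quadratic formula. The starting point is that the algorithm outputs $p \in P$ only after verifying $F'_p \geq \phi N$, so $\sum_{p \in P} F'_p \geq k \phi N$. I then need to upper-bound $\sum_p F'_p$ in a way that is at most linear in $N$ and at most quadratic in $k$, which requires passing from $F'_p$ to the true $F_p$ via Theorem~\ref{thm:multid} and the Space Saving accuracy guarantees.

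The first key step is to show $\sum_{p \in P} F_p \leq A N$. Each stream element $e$ contributes its weight $f(e)$ to $F_p$ only when $p$ is a \emph{minimal} $P$-ancestor of $e$, and those minimal $P$-ancestors form an antichain among the ancestors of $e$ in the two-dimensional lattice; such antichains have size at most $A = 1 + \min(h_1,h_2)$ by the product structure. Applying Theorem~\ref{thm:multid} and Space Saving's per-counter error bound of $\epsilon N$ next yields
\[
F'_p - F_p \;\leq\; \epsilon N\,(1 + |H_p| + |T_p|),
\]
so I need bounds on $\sum_p |H_p|$ and $\sum_p |T_p|$. The same antichain argument, now applied to each fixed $h \in P$, shows that $\{p \in P : h \in H_p\}$ is an antichain of ancestors of $h$ of size at most $A$, giving $\sum_p |H_p| \leq Ak$. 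For $\sum_p |T_p|$ I would use pair-counting: $|T_p| \leq \binom{|H_p|}{2}$, and swapping the order of summation,
\[
\sum_p |T_p| \;\leq\; \sum_{\{h_1,h_2\} \subset P} |\{p \in P : h_1, h_2 \in H_p\}|,
\]
where each inner set is an antichain in the common-ancestor sublattice of $h_1, h_2$; bounding the inner count by an antichain bound of order $A$ yields $\sum_p |T_p| = O(Ak^2)$.

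Combining everything, $k \phi N \leq AN + \epsilon N\bigl((1+A)k + O(A k^2)\bigr)$; after collecting terms this rearranges into an inequality of the shape $\tfrac{A\epsilon}{4}\,k^2 - (\phi - (1+A)\epsilon)\,k + A \leq 0$, whose smaller root is exactly the claimed expression. Since this is an upward-opening parabola, $k$ a priori could also land above the larger root, and I would rule that case out either by a continuity/bootstrap from the trivial base case $k = 0$ (where the inequality's left side is positive), or by invoking the crude a priori bound $|P| \leq H/\epsilon$ from the total number of counters, which lies well below the larger root for the parameter regime of interest.

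The main obstacle is pinning down the exact constant in the $\sum_p |T_p|$ bound so that the derived quadratic has precisely the coefficient $\tfrac{A\epsilon}{4}$: the naive pair-counting above gives only $\tfrac{A\epsilon}{2}$, and closing that factor of two likely requires exploiting the ``exactly two, not three or more'' clause in the definition of $T_p$, or the staircase structure of 2D antichains within the common-ancestor sublattice of each pair, to avoid the crude uniform antichain-size bound of $A$.
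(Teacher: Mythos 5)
Your overall architecture matches the paper's: you set up the quadratic in $k=|P|$ from $\sum_p F'_p \geq k\phi N$, bound $\sum_p F_p \leq AN$ by the antichain argument on the minimal $P$-ancestors of each stream element, decompose $F'_p - F_p$ via Theorem~\ref{thm:multid} and the Accuracy guarantees, and get $\sum_p |H_p| \leq Ak$ from the antichain $B_h = \{p \in P : h \in H_p\}$. All of that is exactly what the paper does. The larger root is handled the same way (the paper uses the crude bound $|P| \leq H/\phi$).

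The genuine gap is the one you flagged yourself: your pair-counting bound $\sum_p |T_p| \leq A\binom{k}{2} \approx \frac{A}{2}k^2$ produces a quadratic with leading coefficient $\frac{A\epsilon}{2}$ rather than $\frac{A\epsilon}{4}$, and the theorem's stated expression (in particular the $\frac{2}{A\epsilon}$ prefactor and the $A^2\epsilon$ under the radical) is precisely the smaller root of the quadratic with coefficient $\frac{A\epsilon}{4}$; with your constant you prove only a strictly weaker statement. The missing idea is exactly where you guessed it lives, but guessing its location is not supplying it. The paper defines, for each $p$, a graph $G_p$ on vertex set $H_p$ with an edge $(h_1,h_2)$ whenever $\mathrm{glb}(h_1,h_2) \in T_p$, and proves $G_p$ is \emph{triangle-free}: if $h_1,h_2,h_3 \in H_p$ pairwise admit common descendants, then (since $H_p$ is an antichain in two dimensions) sorting by the first coordinate forces a strictly decreasing second coordinate, and the staircase structure makes $\mathrm{glb}(h_1,h_3)$ a common descendant of all three, so it is excluded from $T_p$ by the ``exactly two, not three or more'' clause. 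Tur\'an's theorem then gives $|T_p| \leq |H_p|^2/4$, and combining with $\sum_p |H_p| \leq Ak$ and $|H_p| \leq k$ yields $\sum_p |T_p| \leq Ak^2/4$, recovering the factor of two. Without this triangle-free/Tur\'an step the claimed constant is not reached, so the proof as proposed is incomplete.
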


The error guarantee obtained from Theorem \ref{thm:multidbound} appears messy, but yields useful bounds in many realistic settings. For example, for IP 
addresses at byte-wise granularity, $A = 5$. Plugging in $\phi=.1$, $\epsilon = 10^{-4}$ yields $|P| \leq 53$, which is very close to the maximum number of exact 
HHHs: $A/\phi = 50$.  
As further examples, setting $\phi=.05$ and $\epsilon=10^{-5}$ yields a bound of $|P| \leq 102$, and setting $\phi=.01$ and $\epsilon=10^{-6}$ yields a bound of $|
P| \leq 536$, both of which are reasonably close to $\frac{A}{\phi}$.
Of course, the bound of Theorem \ref{thm:multidbound} should not be viewed as tight in practice, but rather as a worst-case guarantee on output size.


\medskip
\noindent \textbf{Higher Dimensions.}
In higher dimensions, we can again keep one instance of Space Saving at each node of the hierarchy to compute estimates $f_{\text{min}}(p)$ and $f_{\text{max}}(p)$ of the unconditioned count of each prefix $p$. We need only modify the Output procedure to conservatively estimate the \emph{conditioned} count of each prefix. 

We can show that the natural generalization of Theorem \ref{thm:multid} does not hold in three dimensions.  However, we can compute estimated conditioned sublattice counts $F'_p$ as 
$$F'_p = f(p)-\sum_{h \in H_p} f_{\text{min}}(h) + \sum_{(h\in H_p, h' \in
  H_p) \wedge q = \text{glb}(h, h')} f_{\text{max}}(q).$$  Inclusion-exclusion
implies that, in any dimension, $F_p \leq F'_p$, and hence by
outputting $p$ if $F'_p \geq \phi N$ we can satisfy Coverage.

\section{Extensions}
\label{sec:extensions}
Our algorithms are easily adopted to distributed or parallel settings, and can be efficiently implemented in commodity hardware such as ternary content addressable memories.

\medskip
\noindent \textbf{Distributed Implementation.}
In many practical scenarios a data stream is distributed across
several locations rather than localized at a central node (see, e.g.,
\cite{distributed1, distributed2}). For example, multiple sensors
may be distributed across a network. We extend our algorithms
to this setting.

Multiple independent instances of Space Saving can be merged
to obtain a single summary of the concatenation of the distributed data streams
with only a constant factor loss in accuracy, as shown in \cite{2009}.
We use this form of their result:
\begin{theorem} (\cite[Theorem 11]{2009}, simplified statement): Given summaries of $k$ distributed data streams produced by $k$ instances of Space Saving each with $\frac{1}{\epsilon}$ counters, a summary of the concatenated stream can be obtained such that the error in any estimated frequency is at most $3\epsilon N$, where $N$ is the length of the concatenated stream. \end{theorem}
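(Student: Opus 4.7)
The plan is to exhibit an explicit merge procedure and analyze its error in two stages. Let $T_i$ denote the items tracked by the $i$-th Space Saving instance, with stored counters $\hat{f}_i(e)$ for $e \in T_i$ and minimum counter $m_i := \min_{e \in T_i} \hat{f}_i(e)$. Form the union $T := \bigcup_i T_i$ and define the merged estimate $\tilde{f}(e) := \sum_{i \,:\, e \in T_i} \hat{f}_i(e)$ for each $e \in T$. This summary has at most $k/\epsilon$ entries, and by swapping the order of summation its total mass is $\sum_i \sum_{e \in T_i} \hat{f}_i(e) = \sum_i N_i = N$, exactly the length of the concatenated stream.

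First I would prove the per-item bound $|\tilde{f}(e) - f(e)| \le \epsilon N$ for every item $e$, where $f(e) := \sum_i f_i(e)$ is the true frequency. For the overestimate direction, Equation~\ref{eq:ss} with $j = 0$ gives $\hat{f}_i(e) - f_i(e) \le \epsilon N_i$ for every $i$ with $e \in T_i$; summing and dropping the non-negative term $\sum_{i: e \notin T_i} f_i(e)$ yields $\tilde{f}(e) - f(e) \le \sum_i \epsilon N_i = \epsilon N$. For the underestimate direction, I use two classical Space Saving invariants: (i) for any $i$ with $e \notin T_i$, $f_i(e) \le m_i$, since otherwise $e$ would have displaced the minimum-count tracked item at some point; and (ii) $m_i \le \epsilon N_i$, since the $1/\epsilon$ counters of the $i$-th instance together sum to $N_i$ and the minimum is at most the average. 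Summing (i) and (ii) over the $i$ with $e \notin T_i$ gives $f(e) - \tilde{f}(e) \le \epsilon N$.

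The second stage compresses $T$ to the $1/\epsilon$ items with the largest $\tilde{f}$-values; let $B$ denote the discarded set and $\tau := \max_{e \in B} \tilde{f}(e)$ (with $\tau := 0$ when $|T| \le 1/\epsilon$). Because the top $1/\epsilon + 1$ entries each have value at least $\tau$ and the total mass is $N$, one obtains $\tau \le N/(1/\epsilon + 1) < \epsilon N$. Set the final estimate to $\hat{f}(e) := \tilde{f}(e) + \tau$ for each retained item and $\hat{f}(e) := \tau$ for everything else, mimicking the usual Space Saving semantics. Combining the two stages, $|\hat{f}(e) - f(e)| \le |\tilde{f}(e) - f(e)| + \tau \le 2\epsilon N$ for every $e$, comfortably below the claimed $3\epsilon N$. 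The main obstacle is this second stage: showing that $\tau = O(\epsilon N)$ despite the merged summary growing to $k/\epsilon$ entries is exactly where the total-mass identity $\sum_{e \in T} \tilde{f}(e) = N$ becomes essential, and it in turn forces the definition of $\tilde{f}$ as a sum over tracked-only counters rather than as a sum padded with the defaults $m_i$ (which would inflate the total mass by roughly $kN$ and ruin the bound on $\tau$).
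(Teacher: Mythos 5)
Your proof is correct, but note that the paper itself offers no proof of this statement: the theorem is imported, in simplified form, from Berinde et al.\ \cite[Theorem 11]{2009}, so there is no internal argument to compare against. Your merge-then-prune construction is a clean, self-contained derivation of the simplified claim, and in fact it establishes the stronger bound $2\epsilon N$: the first stage (summing tracked counters and combining the overestimation bound of Equation~\ref{eq:ss} with the two Space Saving invariants $f_i(e)\le m_i$ for untracked $e$ and $m_i\le \epsilon N_i$) gives per-item error $\epsilon N$ for the union summary, and the second stage exploits the total-mass identity $\sum_{e\in T}\tilde f(e)=N$ to bound the pruning threshold $\tau$ by $\epsilon N$. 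The two invariants you invoke (the counters of one instance sum to exactly $N_i$, and any untracked item has true count at most the minimum counter) are standard for Space Saving but should formally be proved by induction over the updates for a fully rigorous account; the threshold bound also implicitly uses that at least $\frac{1}{\epsilon}+1$ items of $T$ have merged count at least $\tau$, which is fine but worth stating. The difference from the cited source is one of generality rather than correctness: Berinde et al.'s Theorem 11 is formulated so that the merged-and-pruned summary preserves the stronger residual-error guarantees $N^{\text{res}(j)}/(m-j)$ of Equation~\ref{eq:ss}, which is where the constant $3$ originates; your argument targets only the $j=0$ case that the paper actually uses, and in exchange for giving up the residual bounds it buys a smaller constant and a considerably more elementary proof.
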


To handle $k$ distributed data streams, we may simply run one instance of our algorithm independently on each stream (with $\frac{3}{\epsilon}$ counters each), and afterward, for each node in the lattice, merge all $k$ corresponding instances of Space Saving into a single instance.
After the merge, we have a single instance of Space Saving for each node in the lattice that has essentially the same error guarantees (up to a small constant factor) as a centralized implementation. Our output procedure is exactly as in the centralized implementation.


\medskip
\noindent \textbf{Parallel Implementation.}
In all of our algorithms, the update operation involves updating a number of independent Space Saving instances. It is therefore trivial to parallelize this algorithm. 
We have parallelized this algorithm using OpenMP.  Our limited experiments show essentially linear speedup, up to the point where we reach the limitation of the shared memory constraint.

\eat{However, the real time does not decrease quite linearly, indicating that there is a memory access bottleneck. {\bf I think I'd be in favor of cutting the real-time plot for space reasons, but if we leave it in, we need to explain this last sentence more. I thought if the CPU is waiting for memory access, it counts against both CPU time and real time? So I don't think a memory bottleneck is the explanation. Maybe it could be caused by context switching? MM do you have thoughts on what might be causing this, and/or agree with just cutting the ``real-time'' graph, especially in light of the fact that wumpus is not a ``controlled system''? --JT}}

\medskip
\noindent \textbf{TCAM Implementation.}
Recently, there has been an effort to develop network algorithms that
utilize Ternary Content Addressable Memories, or
\emph{TCAMs}, to process streaming queries faster.  TCAMs are
specialized, widely deployed hardware that support constant-time
queries for a bit vector within a database of ternary vectors, where
every bit position represents $0$, $1$ or $*$. The $*$ is a wild card
bit matching either a $0$ or a $1$. In any query, if there is one
or more match, the address of the highest-priority match is returned.
Previous work describes a TCAM-based implementation of Space Saving
for unitary updates, and shows experimentally that 
it is several times faster than software solutions
\cite{tcam}.

Since our algorithms require the maintenance of $H$ independent
instances of Space Saving, it is easy to see that our algorithms can
be implemented given access to $H$ separate TCAMs, each
requiring just a few KBs of memory. With more effort, we can devise
implementations of our algorithms that use just
a single commodity TCAM. Commodity TCAMs can store hundreds of thousands or
millions of data entries \cite{tcam}, and therefore a single TCAM can
store tens of instances of Space Saving even when $\epsilon=.0001$.

Our simplest TCAM-based implementation takes advantage of the fact
that TCAMs have \emph{extra bits}. A typical TCAM has a width of 144
symbols allotted for each entry in the database, and this typically
leaves several dozen unused symbols for each entry. The implementation
of Space Saving of \cite{tcam} uses extra bits to store frequencies,
but we can use additional unused bits to identify the instance of
Space Saving associated with each item in the database.

For illustration, consider the one-dimensional byte-wise IP hierarchy. We associate two extra bits with each entry in the database: 00 will correspond to the top-most level of the hierarchy, 01 to the second level, 10 to the third, and 11 to the fourth. Then we treat each IP address $a.b.c.d$ as four separate searches:
$a.b.c.d$.00, $a.b.c.*$.01, $a.b.*.*$.10, and $a.*.*.*$.11, thereby updating each ancestor
of $a.b.c.d$ in turn. The TCAM needs to store the smallest counter for each
of the four Space Saving instances, and otherwise the TCAM-based implementation from
\cite{tcam} is easily modified to handle multiple 
instances of Space Saving on a single TCAM. 

Alternatively, we could compute approximate unconditioned counts by keeping a \emph{single} instance of Space Saving with (item, mask) pairs as keys, rather than $H$ separate instances of Space Saving. It is clear that this approach still satisfies the Accuracy guarantees for each prefix, and has the advantage of only having to store the smallest counter for a single instance of Space Saving.

\medskip
\noindent \textbf{Sliding Windows and Streams with Deletions.}
Our algorithms as described only work for insert-only
streams, due to our choice of Space Saving
as our heavy hitter algorithm. However, the accuracy and coverage
guarantees of our HHH algorithms still hold even if we replace Space
Saving with other heavy hitter algorithms.  This is
because our proofs of accuracy and coverage applied the
inclusion-exclusion principle to express conditioned counts in terms
of unconditioned counts, and then used the fact that our heavy hitter
algorithm provides accurate estimates on the unconditioned counts;
this analysis is independent of the heavy hitter algorithm used.
Hence we can extend our results to additional scenarios by using other
algorithms.

For example, it may be desirable to compute HHHs over only a sliding window of the last $n$ items seen in the stream.  
\cite{sliding} presents a deterministic algorithm for computing $\epsilon$-approximate heavy hitters over sliding windows using $O(1/\epsilon)$ space. 
Thus, by replacing Space Saving with this algorithm, we obtain an algorithm that computes approximate 
HHHs over sliding
windows using space $O(H/\epsilon)$, which asymptotically matches the space usage of our algorithm. However,
it appears this algorithm is markedly slower and less space-efficient in practice. 
 
Similarly, many \emph{sketch-based} heavy hitter algorithms such as that of \cite{count-min} can compute $\epsilon$-approximate heavy hitters,
even in the presence of deletions, using space $O(\frac{1}{\epsilon} \log N)$. By replacing Space Saving with such a sketch-based algorithm, we obtain
a HHH algorithm using space $O(\frac{H}{\epsilon} \log N)$ that can handle streams with deletions.  (As noted previously, this variation was
mentioned in \cite{journal}.)
\vspace{-3mm}

\section{Experimental Results}
\label{sec:experiments}

We have implemented two versions of our algorithm in \texttt{C} and tested it using GCC version 4.1.2 on a host with four single-core 64-bit AMD Opteron 850 processors each running at 2.4GHz with a 1MB cache and 8GB of shared memory. The first version -- termed \texttt{hhh} below -- uses a heap-based implementation of Space Saving that can handle arbitrary updates, while the second version -- termed \texttt{unitary} below -- uses the \emph{Stream Summary} data structure and can only handle unitary updates. Both versions use an off-the-shelf implementation from \cite{code} for Space Saving; further optimizations, as well as different tradeoffs between time and space, may be possible by modifying the off-the-shelf implementation. We have used a real packet trace from \texttt{www.caida.org} \cite{caida} in all experiments below.  (We have tried other traces to confirm that these results are demonstrative. Note that all of our graphs are in color and may not display well in grayscale.)  Throughout our experiments, all algorithms define the frequency of an IP address or an IP address pair to be the number of packets associated with that item, as opposed to the number of bytes of raw data. This ensures that all algorithms (including \texttt{unitary}) process exactly the same updates. Consequently, the stream length $N$ in all of our experiments refers to the number of packets in the stream (i.e. the prefix of the packet trace \cite{caida} that we used). 

We tested our algorithms at both byte-wise and bit-wise granularities in one and two dimensions. Bit-wise hierarchies are more expensive to handle, as $H$, the number of nodes in the lattice structure implied by the hierarchy, becomes much larger. However, it may be useful to track approximate HHHs at bit-wise granularity in many realistic situations. For example, a single entity might control a subnet of IP addresses spanning just a few bits rather than an entire byte. However, we observed similar (relative) behavior 
between all algorithms at both bit-wise and byte-wise granularity, and thus we display results only for byte-wise hierarchies for succinctness. 

For comparison we also implemented the full and partial ancestry algorithms from \cite{journal}, labeled \texttt{full} and \texttt{partial} 
respectively. 
We compare the algorithms' performance in several respects: time and memory usage, the size of the output set, and the accuracy of the unconditioned count estimates. Our algorithm performs at least as well as the other two in terms of output size and accuracy. Except for extremely small values of $\epsilon$ (less than about .0001), which correspond to extremely high accuracy guarantees, our two-dimensional algorithm is also significantly faster (more than three times faster for some parameter settings of high practical interest).  Our one-dimensional algorithm is also faster than its competitors for values of $\epsilon$ greater than about .01, and competitive across all values of $\epsilon$. Our algorithm uses slightly more memory than its competitors. Below, we discuss each aspect separately.

In summary, our one-dimensional algorithm is competitive in practice
with existing solutions, and possesses other desirable properties that
existing solutions lack, such as improved simplicity and ease of
implementation, improved worst-case guarantees, and the ability to
preallocate space even without knowledge of the stream length. Our
two-dimensional algorithm possesses all of the same desirable
properties, and is also significantly faster than existing solutions
for parameter values of primary practical interest. The primary
disadvantage of our algorithms is slightly increased space usage.

All of our implementations are available online at \cite{ourcode}.

\newlength{\figwidth}
\setlength{\figwidth}{0.30\textwidth}
\begin{figure*}
\centering
\subfloat[Maximum memory usage in one dimension over all stream lengths $N$. For \texttt{hhh} and \texttt{unitary}, space usage does \emph{not} depend on $N$; space usage only varied with $N$ for \texttt{partial} and \texttt{full}.]
{
\includegraphics[width=\figwidth]{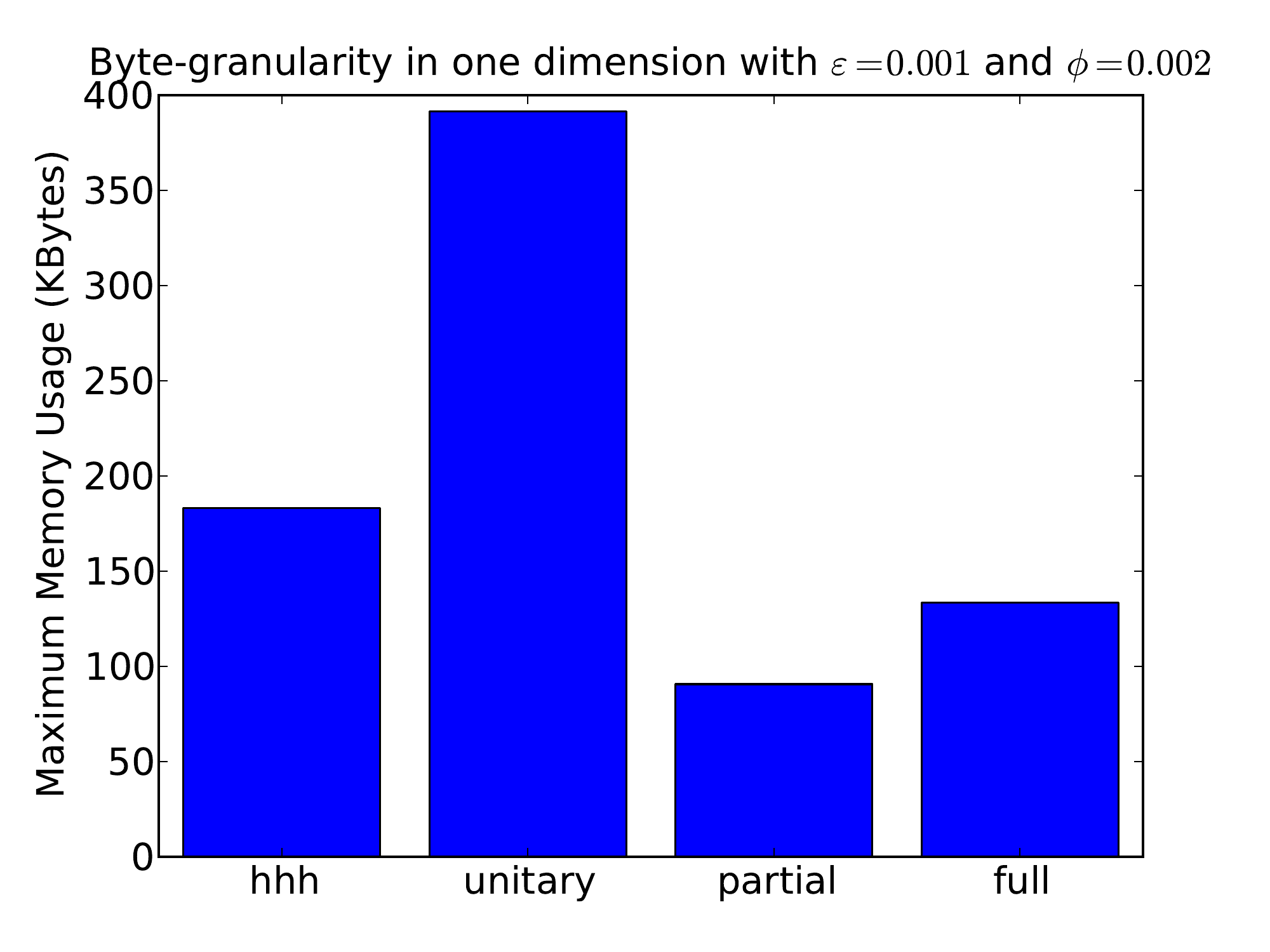}
\label{figure:memory_1000-500_1}
}%
\hspace{0.08in}
\subfloat[Maximum memory usage in two dimensions over all stream lengths $N$. For \texttt{hhh} and \texttt{unitary}, space usage does \emph{not} depend on $N$; space usage only varied with $N$ for \texttt{partial} and \texttt{full}. ]
{
\includegraphics[width=\figwidth]{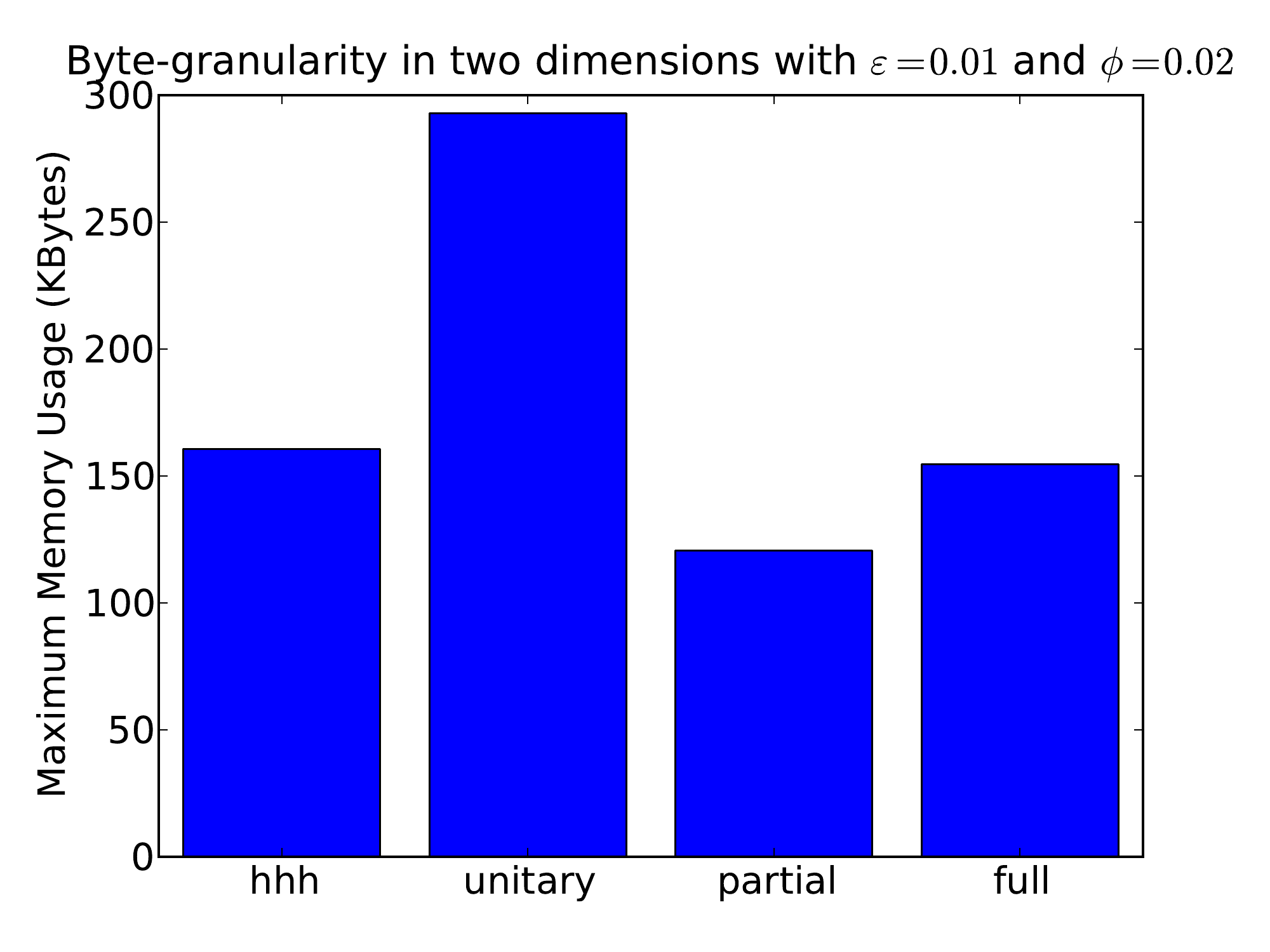}
\label{figure:memory_100-50_2}
}%
\hspace{0.08in}
\subfloat[Memory usage in one dimension for fixed stream length.]
{
\includegraphics[width=\figwidth]{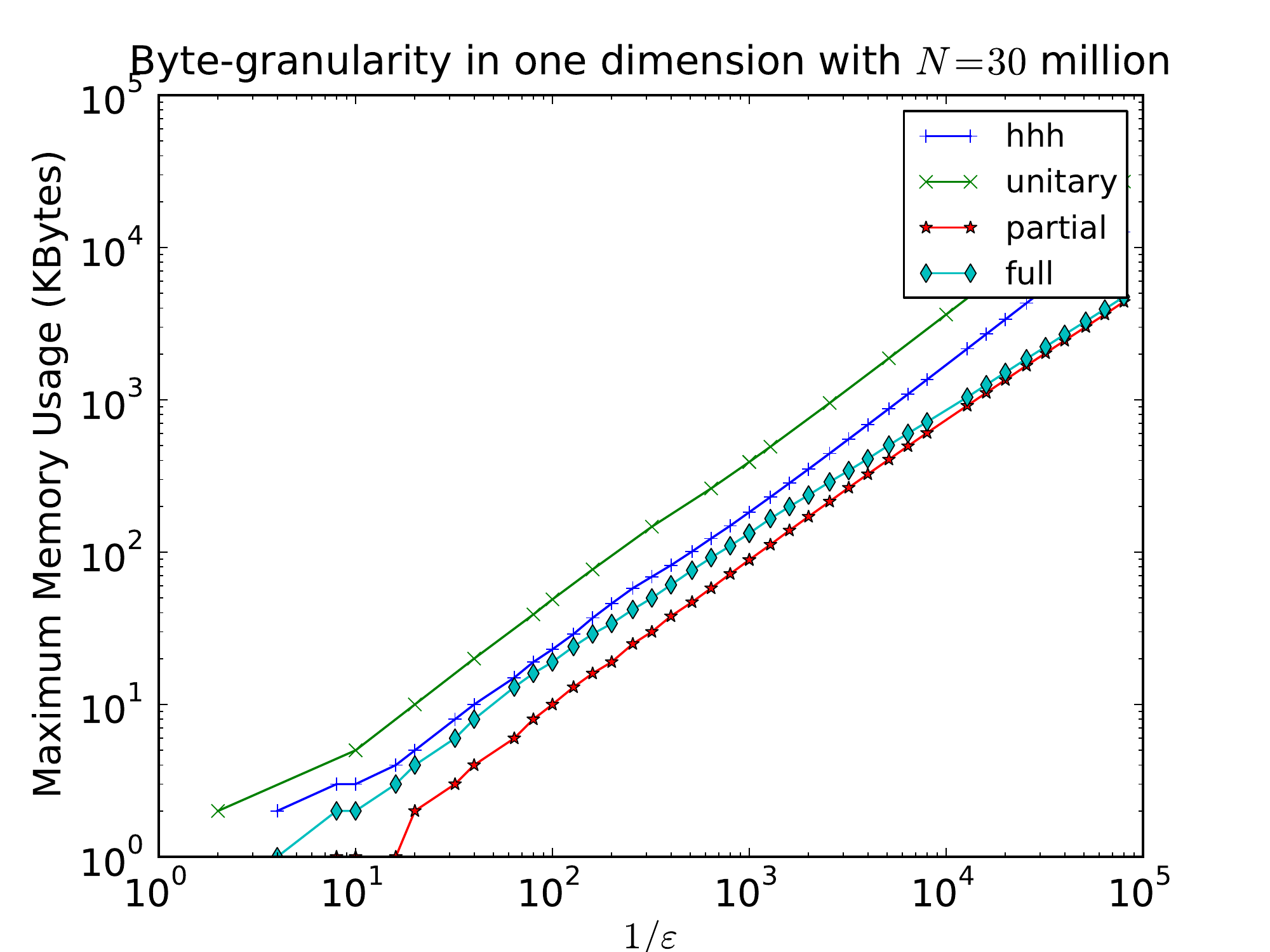}
\label{figure:vepsmemory_1} 
}%
\\
\subfloat[Memory usage in two dimensions for fixed stream length.]
{
\includegraphics[width=\figwidth]{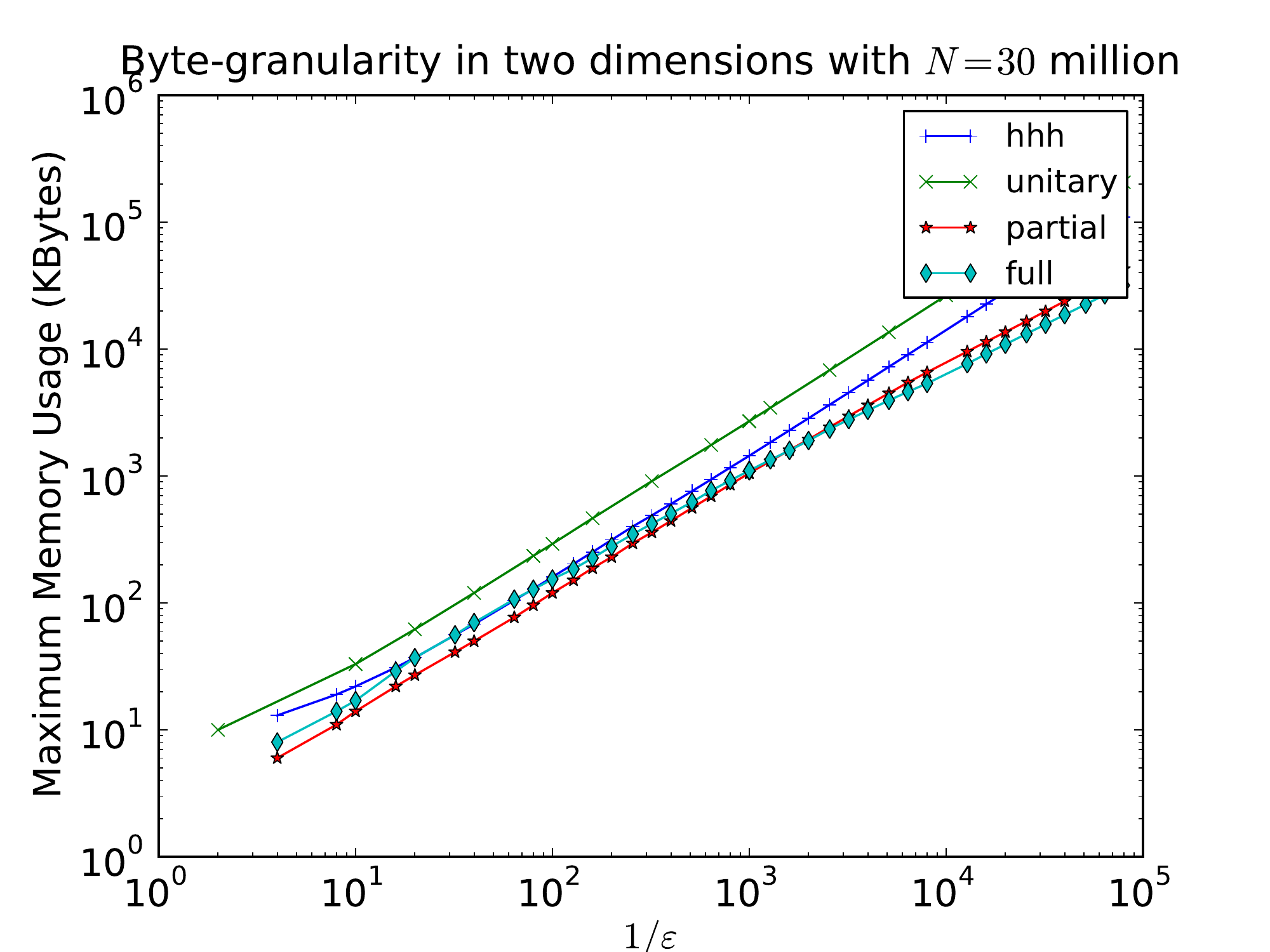}
\label{figure:vepsmemory_2}
}%
\hspace{0.08in}
\subfloat[Speed comparison in two dimensions with high $\epsilon$.]
{
\includegraphics[width=\figwidth]{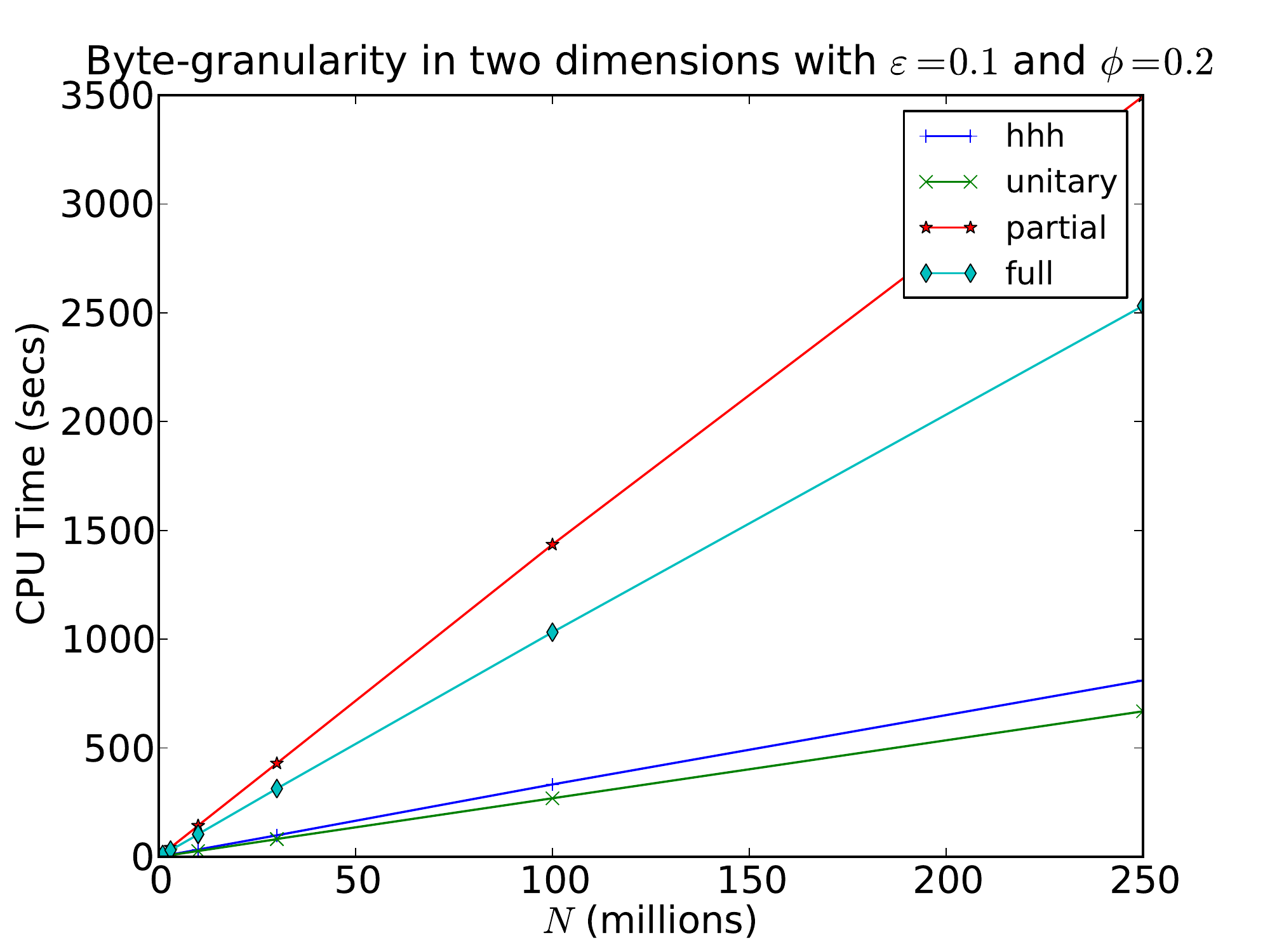}
\label{figure:time_10-5_2}
}%
\hspace{0.08in}
\subfloat[Speed comparison in two dimensions with medium $\epsilon$.]
{
\includegraphics[width=\figwidth]{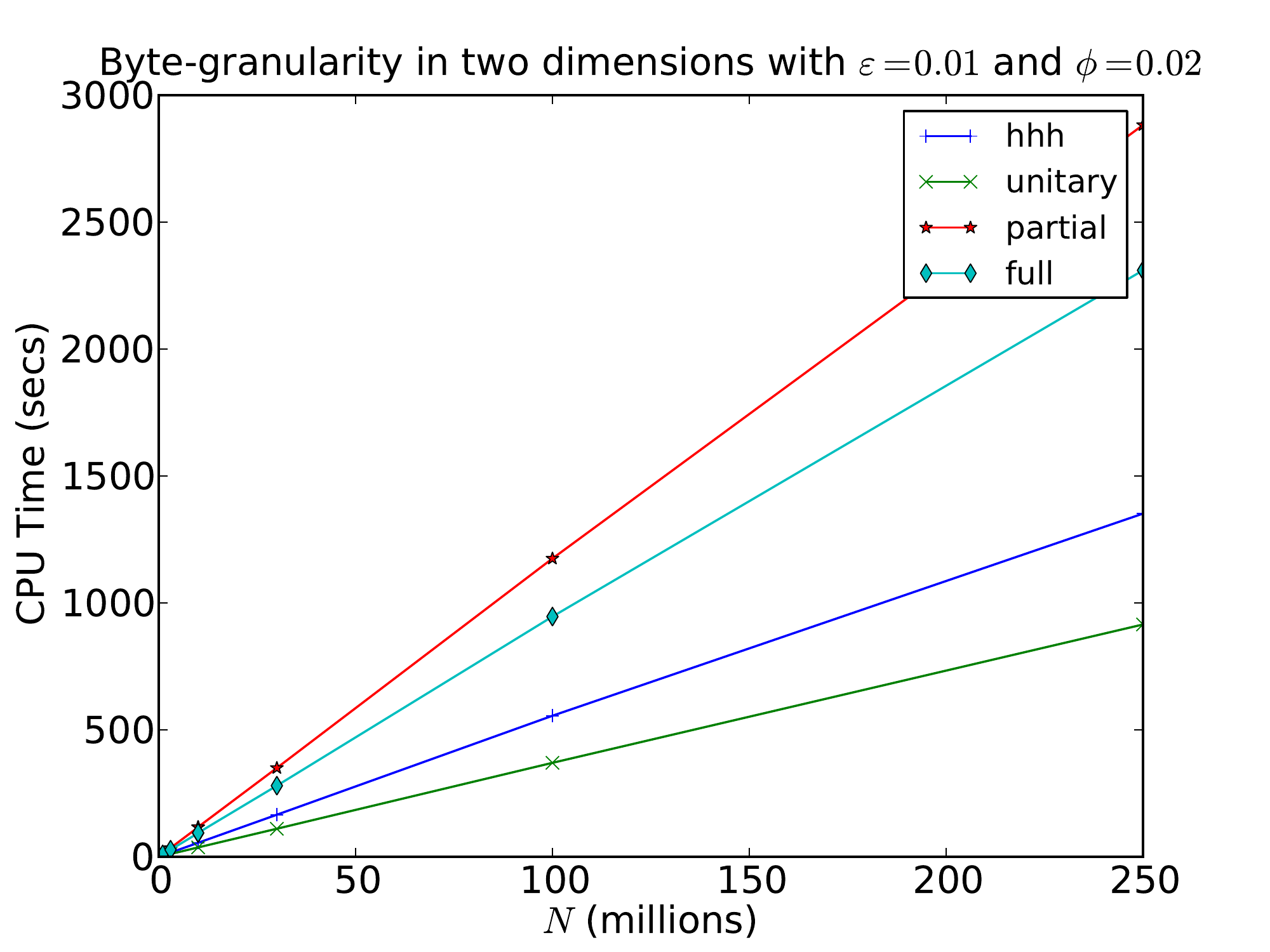}
\label{figure:time_10-5_1}
}\\
\subfloat[Speed comparison in two dimensions with low $\epsilon$.]
{
\includegraphics[width=\figwidth]{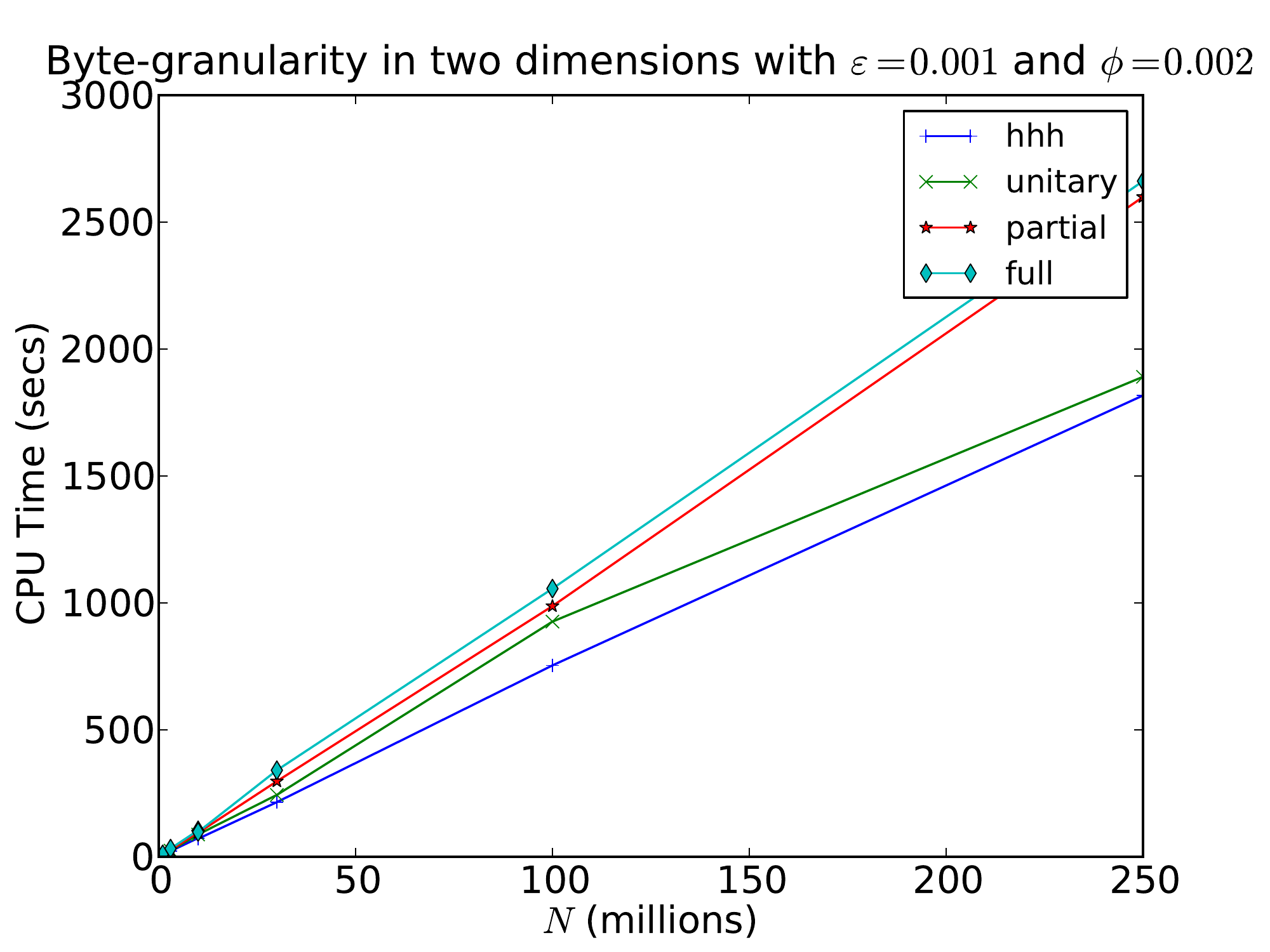}
\label{figure:time_1000-500_2}
}%
\subfloat[Speed comparison in one dimension with high $\epsilon$.]
{
\includegraphics[width=\figwidth]{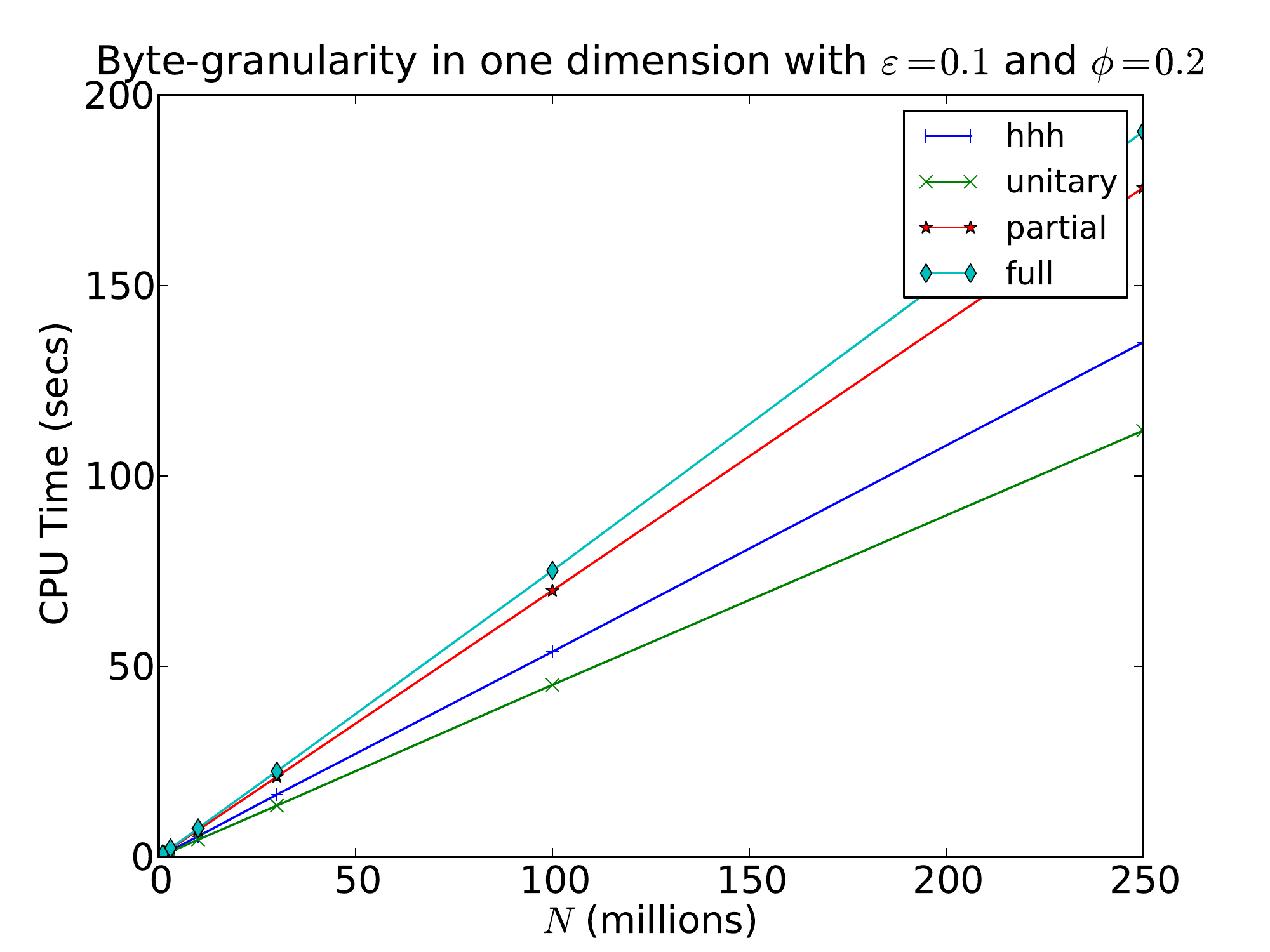}
\label{figure:time_10-5_1}
}%
\hspace{0.08in}
\subfloat[Speed comparison in one dimension with medium $\epsilon$.]
{
\includegraphics[width=\figwidth]{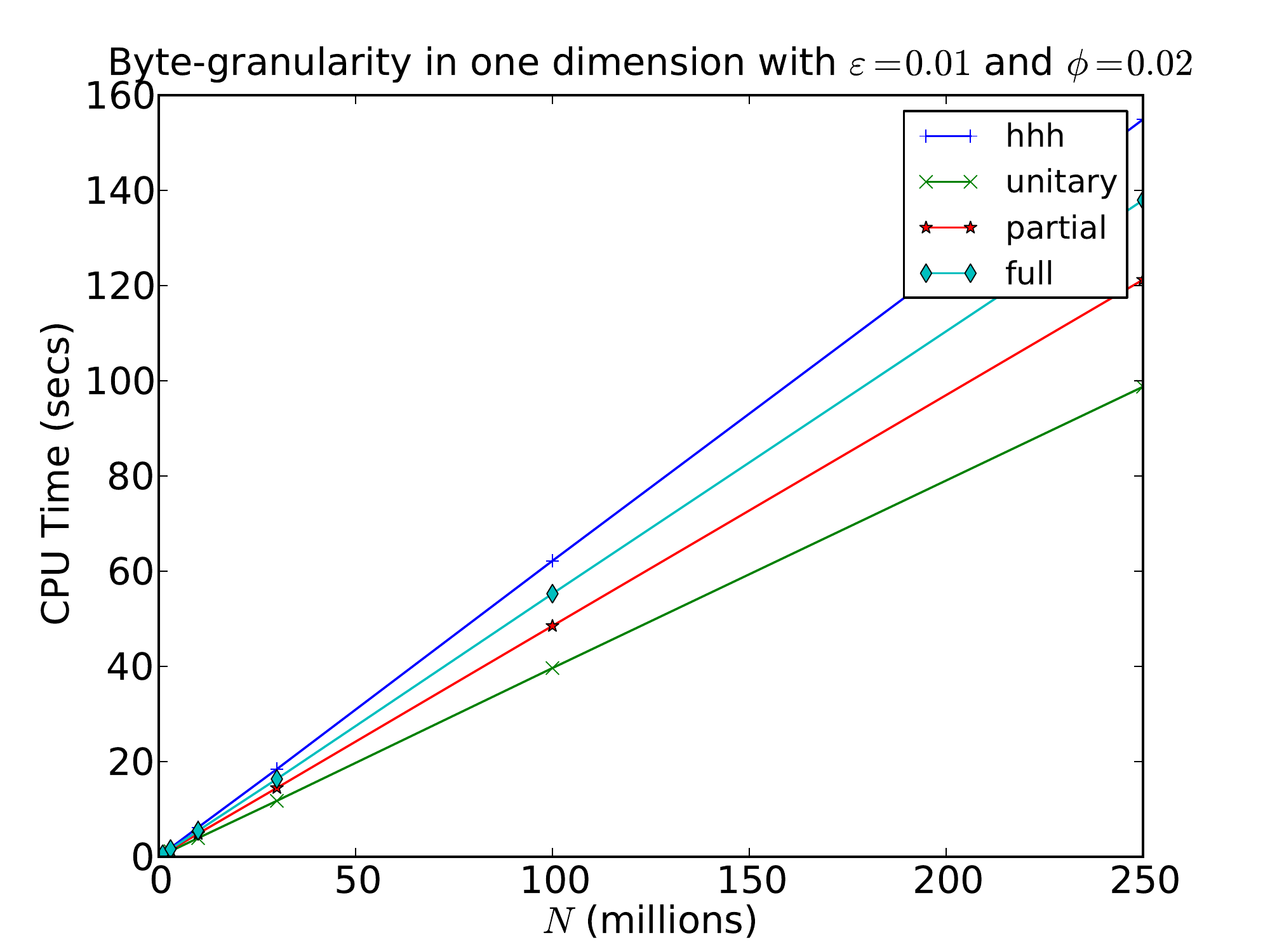}
\label{figure:time_100-50_1}
}
\\
\subfloat[Speed comparison in one dimension with low $\epsilon$.]
{
\includegraphics[width=\figwidth]{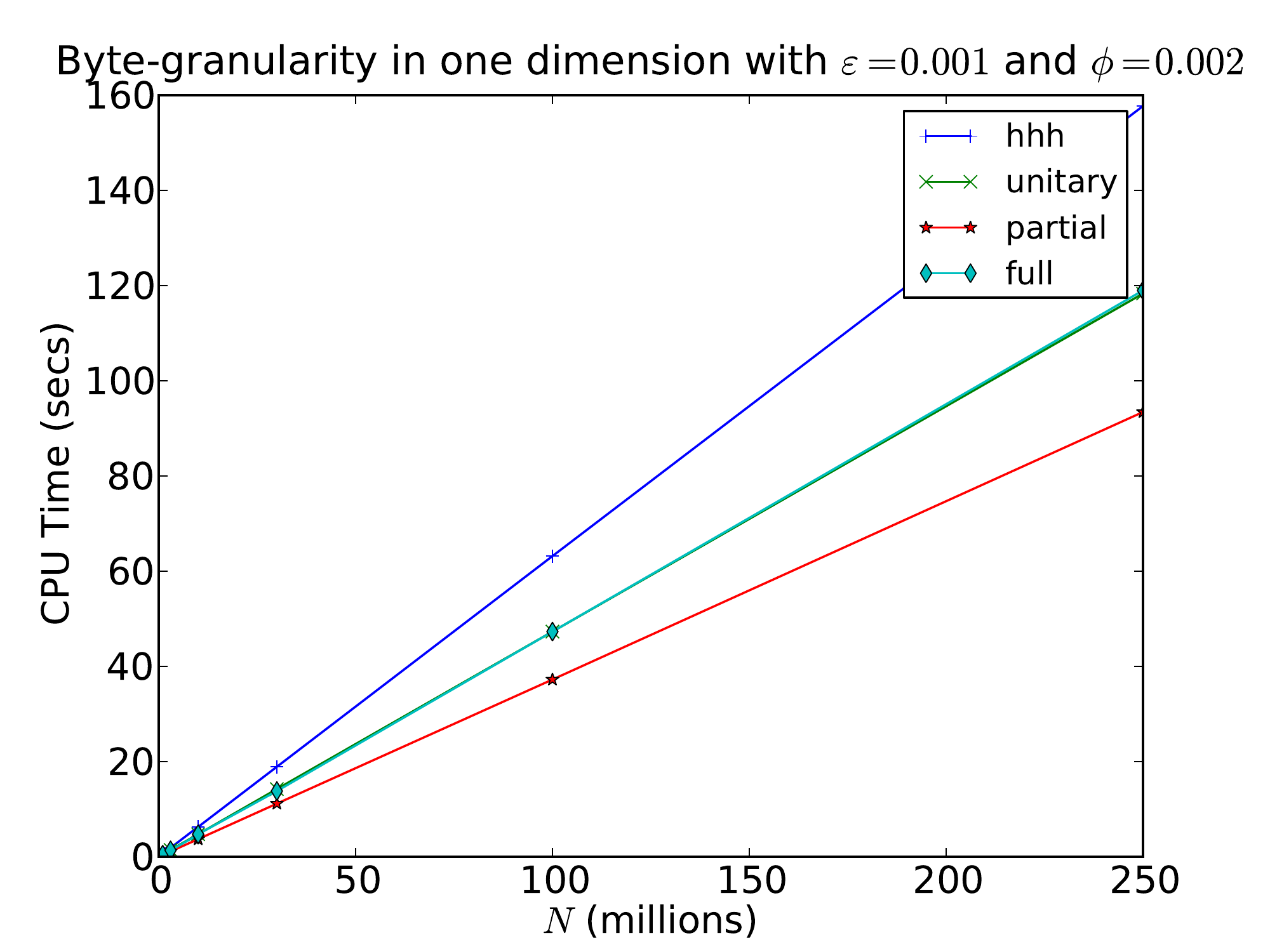}
\label{figure:time_1000-500_1}
}%
\hspace{0.08in}
\subfloat[Speed comparison in one dimension for fixed stream length.]
{
\includegraphics[width=\figwidth]{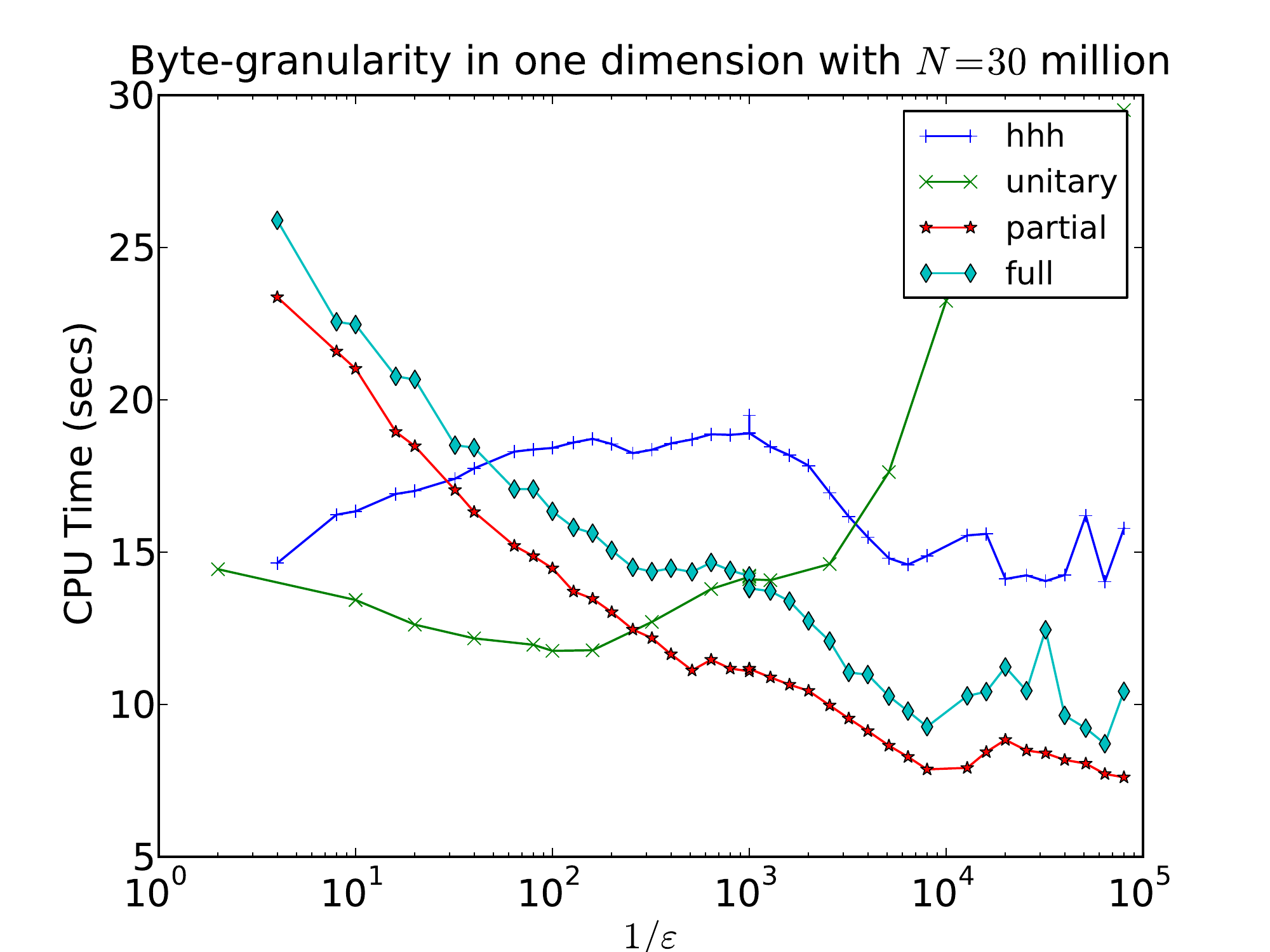}
\label{figure:vepstime_1}
}%
\hspace{0.08in}
\subfloat[Speed comparison in two dimensions for fixed stream length.]
{
\includegraphics[width=\figwidth]{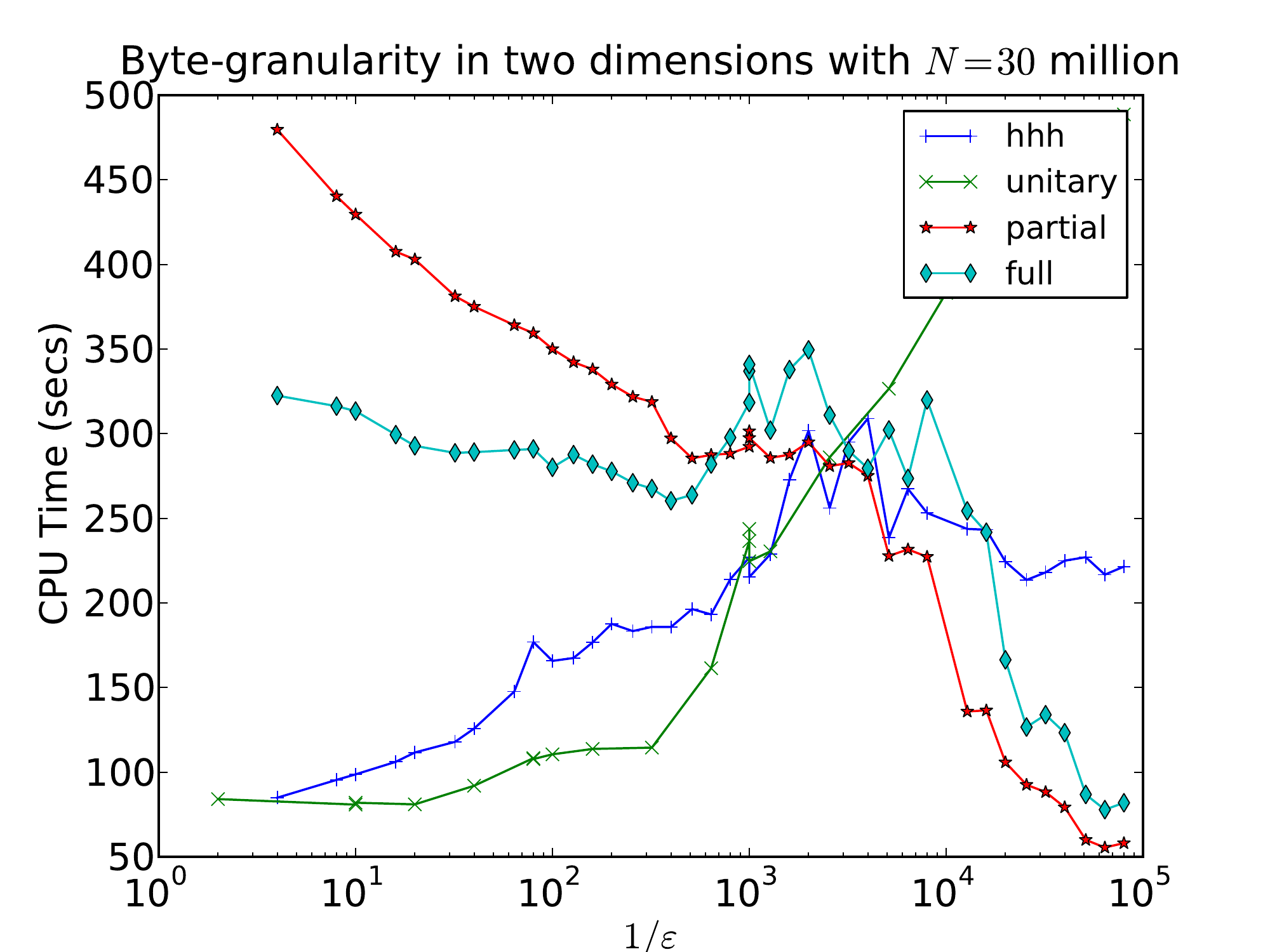}
\label{figure:vepstime_2}
}%
\caption{Memory and speed comparisons.}
\end{figure*}

\eat{
\begin{figure}[t]
\centering
\includegraphics[width=\figwidthes]{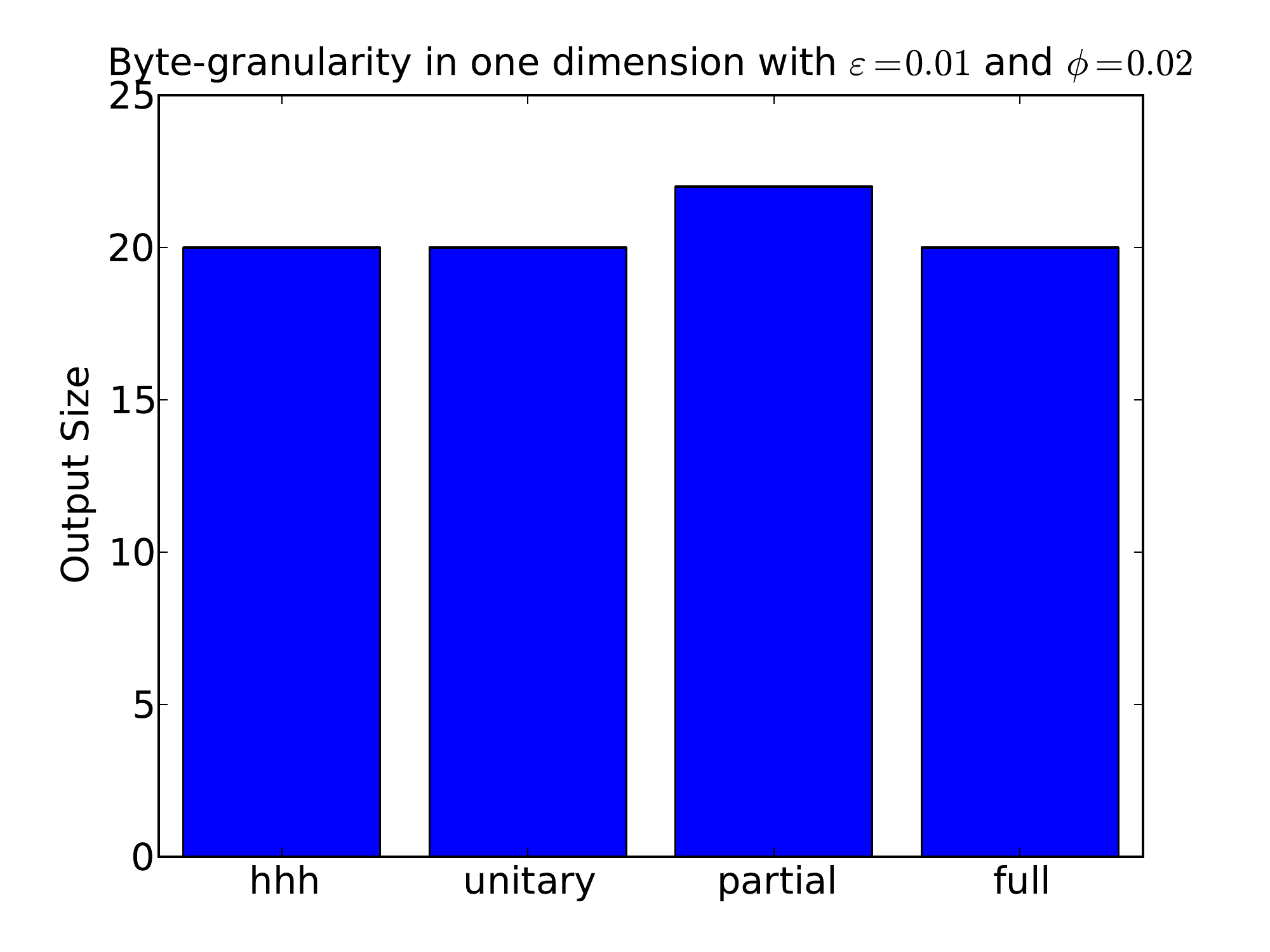}
\vspace{-2mm}
\caption{Output size comparison. For each algorithm, we display the maximum output size over all streams tested (for the given setting of $\epsilon$ and $\phi$).}
\label{figure:outputsize_100-50_1}
\hspace*{-6mm}
\end{figure}
}

\begin{figure*}
\subfloat[Output size comparison. For each algorithm, we display the maximum output size over all streams tested (for the given setting of $\epsilon$ and $\phi$).]
{
\includegraphics[width=\figwidth]{outputsize_100-50_1}
\label{figure:outputsize_100-50_1}
}%
\hspace{0.08in}
\subfloat[Accuracy comparison in one dimension.]
{
\includegraphics[width=\figwidth]{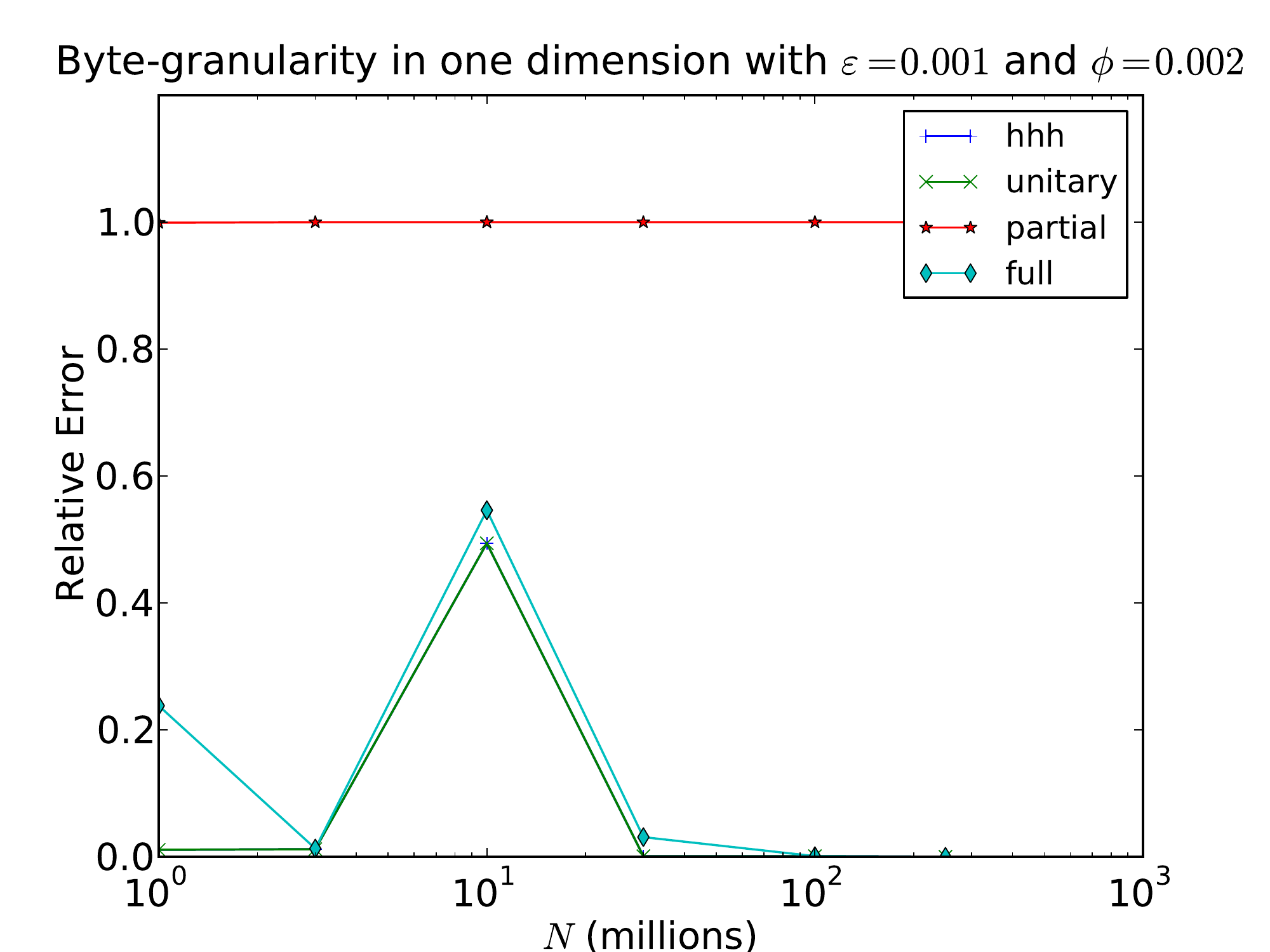}
\label{figure:epsilon_1000-500_1}
}%
\hspace{0.08in}
\subfloat[Accuracy comparison in two dimensions.]
{
\centering
\includegraphics[width=\figwidth]{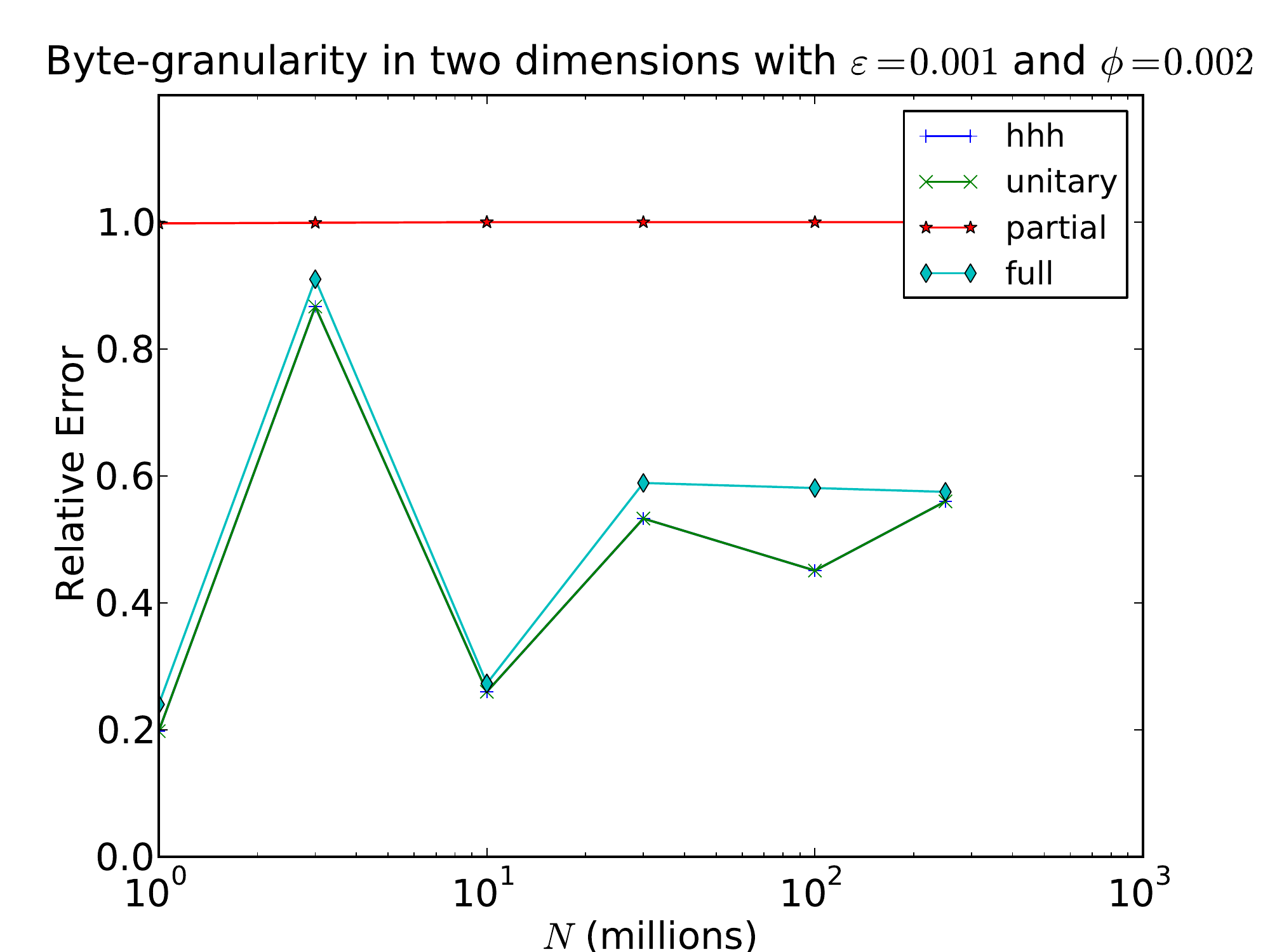}
\label{figure:epsilon_1000-500_2}
}%
\caption{Accuracy and output size comparisons.}
\end{figure*}


\medskip
\noindent \textbf{Memory.}
Both versions of our algorithm use more memory than \texttt{full} and \texttt{partial}.  The difference between \texttt{hhh}, \texttt{partial},
and \texttt{full} is a small constant factor; \texttt{unitary} uses about twice as much space
as \texttt{hhh}.  The largest difference in space usage appears in one
dimension, as shown in Figure
\ref{figure:memory_1000-500_1}. The difference is much
smaller in two dimensions, as shown in Figure
\ref{figure:memory_100-50_2}. In both cases, the better space usage of
\texttt{partial} comes at the cost of significantly decreased accuracy and increased
output size, as discussed below. We conclude that in situations where the decreased accuracy of \texttt{partial}
cannot be tolerated, the memory usage of our algorithms is not a major disadvantage, as \texttt{hhh} and 
\texttt{full} have similar memory requirements, especially in two dimensions.

Ideally, we would be able to present our results with the independent
variable a programmer-level object such as memory usage, rather than
the error-parameter $\epsilon$.  In practice, a programmer may be
allowed at most 1 MB of space to deploy an HHH algorithm on a network
sensor, and have to optimize speed and accuracy subject to this
constraint.  But while the mapping between $\epsilon$ and memory usage
is straightforward for our algorithm (the Space Saving implementation
uses 36 bytes per counter, and we use a fixed $\frac{H}{\epsilon}$
counters), this mapping is less clear for \texttt{partial} and
\texttt{full}, as their space usage is data dependent, with counters
added and pruned over the course of execution. Figures
\ref{figure:vepsmemory_1} and \ref{figure:vepsmemory_2} show the empirical mapping
between space usage and $\epsilon$ for a fixed stream length of
$N=30$ million with one- and two-dimensional bytewise hierarchies.
This setting highlights the importance of our improved worst-case
space bounds, even though our algorithm uses slightly more space in
practice.  It can be imperative to \emph{guarantee} assigned memory
will not be exceeded, and our algorithm allows a more aggressive
choice of error parameter while maintaining a worst-case
guarantee.

We emphasize that we did {\em not} attempt to optimize memory usage for our
algorithms using characteristics of the data, as suggested in Theorem~\ref{thm1}.
It is therefore likely that our algorithms can function with less memory than
\texttt{partial} and
\texttt{full} in many practical
settings.


Note that the running time and memory usage are independent of $\phi$, as $\phi$ only affects the output stage, which we have not included in our measurements as the resources consumed by this stage were negligible.

\vspace{-1mm}
\medskip
\noindent \textbf{Time.}
We observe that in both one and two dimensions, both \texttt{unitary} and \texttt{hhh} are faster than \texttt{partial} and \texttt{full} except for extremely small values of $\epsilon$. The speed of each algorithm for each setting of $\epsilon$ is illustrated for a fixed stream length of $N=30$ million in Figures
\ref{figure:vepstime_1} and \ref{figure:vepstime_2}; our algorithms are fastest in one dimension for $\epsilon$ greater than about $.01$ and in two dimensions for $\epsilon$ greater than about $.0001$. 

We show how runtime grows with stream length for fixed values of $\epsilon$ in Figures \ref{figure:time_10-5_2}
-\ref{figure:time_1000-500_1}. For concreteness, on a stream with $N=250$ million, \texttt{unitary} processes about $2.2$ million updates per second in one dimension at a byte-wise granularity when $\epsilon=.1$, while \texttt{hhh} processes $1.85$ million, \texttt{partial} $1.3$ million, and \texttt{full} processes $1.4$ million. Here, $N$ corresponds to the number of packets (not weighted by size), and the updates per second statistic specifies the number of packets processed per second by our implementation. In two dimensions for $\epsilon=.1$,  \texttt{unitary} processes over $370,000$ updates per second and \texttt{hhh} processes $300,000$, while \texttt{partial} processes $71,000$, and \texttt{full} processes about $100,000$. Thus, our algorithms ran more than three times faster than \texttt{partial} and \texttt{full} for this particular setting of parameters. 



\vspace{-2mm}
\medskip
\noindent \textbf{Output Size.}
\label{sec:outputsize}
The output size and accuracy are measures of the quality of output and depend on the value of $\phi$. All three algorithms produce a near-optimal output size, with \texttt{partial} consistently outputting the largest sets. The largest difference observed is shown in Figure \ref{figure:outputsize_100-50_1}.

\vspace{-1mm}
\medskip
\noindent \textbf{Accuracy.}
\label{sec:accuracy}
We define the relative error of the output to be $\max_{p \in \text{output}} \frac{f_\text{max}(p)-f_\text{min}(p)}{\epsilon N}.$ Clearly, the relative error is between 0 and 1, because of the accuracy guarantees of our algorithms.  We find that the relative error can vary significantly for all algorithms, but our algorithm uniformly performs best. The relative error of the partial ancestry algorithm is often close to the theoretical upper bound of 1, making it by far the least accurate of the algorithms tested.

\vspace{-1mm}
\medskip
\noindent \textbf{TCAM Simulations.} We simulated our TCAM-conscious implementation of our algorithm on the same packet traces as above, in order to estimate the number of TCAM operations our implementation requires per packet processed. \cite{tcam} experimentally demonstrates that TCAM READ, WRITE, and SEARCH operations all take roughly the same amount of time. Thus, we counted the total number of READ, WRITE, and SEARCH operations our TCAM implementation required,
without distinguishing between the three. We found that for one-dimensional IP addresses at byte-wise granularity, each packet required about 14 TCAM operations on average, or 2.8 TCAM operations per instance of Space Saving maintained by our algorithm. This is slightly better than the worst-case behavior of the implementation of \cite{tcam}, which requires up to 4 TCAM operations per update. Our two-dimensional algorithm at byte-wise granularity requires about 65 TCAM operations per packet; since the two-dimensional algorithm maintains 25 instances of Space Saving, this translates
to only 2.6 TCAM operations per instance of Space Saving. We attribute this improvement in TCAM operations per instance of Space Saving to the fact that the frequency distribution at high nodes in the two-dimensional lattice is highly non-uniform.

\vspace{-2mm}


\section{Conclusion}
The trend in the literature on the approximate HHH problem has been towards increasingly complicated algorithms. In this work, we present what is perhaps the simplest algorithm for HHHs in arbitrary dimension, and demonstrate that it is superior to the existing standard in many respects, and competitive in all others. We believe our algorithm offers the best tradeoff between simplicity and performance.

\medskip
\noindent \textbf{Acknowledgements.} We thank Elaine Angelino, Richard Bates, and Graham Cormode for helpful discussions. 
Michael Mitzenmacher is supported by NSF grants CNS-0721491, CCF-0915922, and IIS-0964473, and grants from Cisco, Inc., Google, and Yahoo!.
Justin Thaler is supported by the 
Department of Defense (DoD) through the National Defense Science \&
Engineering Graduate Fellowship (NDSEG)  
Program.

\vspace{-1mm}
{

}

\eat{

}
\normalsize
\vspace{-3mm}
\appendix
\section{Proof of Theorems}

\label{app:proof}

\begin{proof}[Proof of Theorem~\ref{thm:multid}] First note that it is possible, using the inclusion-exclusion principle, to show that $$F_p = f(p)-\sum_{h \in H_p} f(h) + \sum_{(h, h'\in H_p) \wedge q = \text{glb}(h, h')} f(q) $$
\vspace{-2mm}
$$- \sum_{(h, h', h''\in H_p) \wedge q = \text{glb}(h, h', h''')} f(q) + \dots $$ 

 We claim that for all $u$ expressible as the greatest lower
bound of more than two elements of $H_p$, the total contribution of
$f(u)$ to the above sum is 0. Indeed, suppose that $u = (u_1,u_2)$
is a descendant from exactly $m$ such elements, $h_1, h_2,\dots,h_m$
in $H_p$.  Since $u \prec h_{\alpha}$ for $\alpha$ in $\{1, \dots, m\}$ these
$m$ heavy hitter elements can be written as $h_1=(P_{i_1}u_1,
  P_{j_1}u_2),h_2=(P_{i_2}u_1, P_{j_2}u_2),\dots,h_m=(P_{i_m}u_1,
  P_{j_m}u_2)$, where $(P_i u_1, P_j u_2)$ denotes the element
obtained from $(u_1, u_2)$ by generalizing $i$ times on the first
dimension and $j$ times on the second dimension. Renumbering if necessary,
assume the nodes
are sorted on the generality of their first
component so that $i_1 < i_2 < \dots < i_m$.  It is clear that there
are no equalities in the sequence because if $ i_{\alpha}= i_{\beta} $
then either $h_{\alpha} \preceq h_{\beta}$ or $h_{\beta} \preceq
h_{\alpha}$ which contradicts that these are from $H_p$.  When the corresponding
relationships between the second components are examined it can be
seen that the increasing sort on the first component forces a
decreasing order on the second component $j_1 > j_2 > \dots > j_m$
(since if $\alpha < \beta $ and $j_{\alpha} \le j_{\beta}$ then
because $i_{\alpha} < i_{\beta}$ the contradiction $h_\alpha \prec
h_{\beta} $ is reached). Thus the $m$ elements are in a linear
structure with endpoints $h_1$ and $h_m$. Clearly the first component
of $u$ is the first component of $h_1$ and similarly the second
component of $u$ is the second component of $h_m$.  
With this in hand
it is clear that $u$ is the greatest lower bound of any subgroup of
the $m$ elements that includes $h_1$ and $h_m$.  There are $\binom
{m-2} k $ ways to pick $k$ middle terms and thus there are $\binom {m-2}
k $ ways in which node $u$ appears as the greatest lower bound of
$k+2$ elements from $H_p$.  Returning to the sum $$F_p=f(p)-\sum_{h
  \in H_p} f(h) + \sum_{(h, h'\in H_p) \wedge q = \text{glb}(h, h')} f(q)$$
    \vspace{-2mm}
  $$
- \sum_{(h, h', h''\in H_p) \wedge q = \text{glb}(h, h', h''')} f(q) +
\dots $$

\noindent it is now clear that $f(u)$ will appear once in the sum over pairs, $m-2$ times in the sum over triples, and, in general, $\binom {m-2} k $ times in the sum over groups of size $k+2$.  When combined with the sign structure in the sum this gives a resulting contribution from $u$ of $$f(u) \sum_{j=0}^{m-2}(-1)^j \binom {m-2} j = f(u)(1-1)^{m-2} = 0. $$  Thus in the two dimensional case

$$F_p=f(p)-\sum_{h_1 \in H_p} f(h_1) + \sum_{q \in T_p} f(q)$$ 

\noindent as claimed. \end{proof}

\begin{proof}[{Proof of Theorem \ref{thm:multidbound}}]
The proof will closely parallel that of Theorem \ref{thm:conditioned}. We bound the total error in the estimated conditioned  counts, aggregated over all $p \in P$, and this will imply that the sum of the true conditioned  counts of all $p \in P$ is large. Hence there cannot be too many approximate HHHs output. 

We showed in Theorem \ref{thm:multid} that $$F_p=f(p)-\sum_{h_1 \in H_p} f(h_1) + \sum_{q \in T_p} f(q).$$

Therefore, $$F'_p - F_p =\big(f_{\text{max}}(p) -f(p)\big) + \sum_{h_1 \in H_p} \big(f(h_1) - f_{\text{min}}(h_1)\big)$$
$$ + \sum_{q \in T_p} \big(f_{\text{max}} (q) - f(q)\big).$$

Our goal is to show that the sum of the true conditioned  counts of all $p \in P$ is large by bounding the total error in the estimated conditioned  counts, aggregated over all $p \in P$. To this end, consider the sum 
$$\sum_{p \in P} F_p = \sum_{p \in P} F'_p - \sum_{p \in P} (F'_p - F_p) \geq |P| \phi N - \sum_{p \in P} (F'_p - F_p)$$
$$=|P| \phi N - \sum_{p \in P} \big(f_{\text{max}}(p) -f(p)\big) - $$
$$\sum_{p \in P}  \sum_{h_1 \in H_p} \big(f(h_1) - f_{\text{min}}(h_1)\big) - \sum_{p \in P}  \sum_{q \in T_p} \big(f_{\text{max}} (q) - f(q)\big).$$

We refer to the second term on the right hand side of the last expression, $\sum_{p \in P} \big(f_{\text{max}}(p) -f(p)\big)$, as ``Term-Two error", the third term, $\sum_{p \in P}  \sum_{h_1 \in H_p} \big(f(h_1) - f_{\text{min}}(h_1)\big)$, as ``Term-Three" error, and the fourth term, $\sum_{p \in P}  \sum_{q \in T_p} \big(f_{\text{max}} (q) - f(q)\big)$ as ``Term-Four error". By the Accuracy guarantees, it is immediate that the Term-Two error is bounded above by $|P|\epsilon N$. 

In order to bound Term-Three error, we must briefly introduce the notion of comparable items in a lattice. Two elements $x$ and $y$ are \emph{comparable} under the $\preceq$ relation if the label of $y$ 
is less than or equal to that of $x$ on every attribute. Let $A$ be the size of the largest antichain in the lattice, that is, the maximum size of any subset of prefixes such that any two items in the subset are incomparable. It was shown in \cite{journal} that $A = 1+\min(h_1, h_2)$. 
We show that  $\sum_{p \in P} |H_p| \leq A|P|$; it then follows by the Accuracy guarantees that the Term-Three error is bounded above by $|P| A \epsilon N$. To this end, for any $h \in P$, consider the set $B_h=\{p \in P: h \in H_p\}$. We claim that  $|B_h| \leq A$, since all the items in $B_h$ must be incomparable. Indeed, suppose $p, q \in B_h$ and the label of $q$ is less than the label of $p$ on both attributes. Then $h \prec q \prec p$, so by definition of $H_p$, $h \not\in H_p$, which is a contradiction. Thus, $\sum_{p \in P} |H_p|\!=\!\sum_{h \in P} |B_h| \leq A|P|$. 

Finally, we may bound the Term-Four error by $\frac{AP^2}{4} \epsilon N$. This will clearly follow from the Accuracy guarantees if we can bound $\sum_{p \in P} |T_p|$ by $\frac{A|P|^2}{4}$. To this end, for each $p \in P$ let $G_p$ be a graph on $|H_p|$ vertices, where edge $(h_1, h_2) \in E(G_p)$ if and only if $\text{glb}(h_1, h_2) \in T_p$. It is clear that $|T_p| = |E(G)|$. We claim $G$ is a triangle-free graph -- it then follows by Turan's theorem \cite{turan} that $|T_p| \leq \frac{|H_p|^2}{4}$. For three distinct vertices $h_1, h_2, h_3 \in H_p$, let $u_1=\text{glb}(h_1, h_2)$, $u_2=\text{glb}(h_2, h_3)$ and $u_3 = \text{glb}(h_1, h_3)$. We show that if $u_1$, $u_2$ and $u_3$ are all in $T_p$, then for at least one $i$, $u_i$ is a descendant of $h_1, h_2$, and $h_3$, contradicting $u_i \in T_p$.

Write $h_i=(h_{i, 1}, h_{i,2})$ for each $i \in \{1, 2, 3\}$. By assumption $(h_i, h_j)$ share a common descendant for any pair $(i, j)$, so we may assume (renumbering if necessary) that $h_{1, 1} \prec h_{2, 1} \prec h_{3, 1}$ as one-dimensional objects. The remainder of the proof now closely parallels that of Theorem \ref{thm:multid}.
 It is clear that there
are no equalities in the sequence because if $h_{i, 1}= h_{j, 1} $
then either $h_{i, 1} \preceq h_{j, 1}$ or $h_{j,1} \preceq
h_{i,1}$ which contradicts that these are from $H_p$.  For the same reason, it can be
seen that the increasing sort on the first component forces a
decreasing order on the second component, i.e., $h_{3, 2} \prec h_{2, 2} \prec h_{1, 2}$. 
Consequently, $u_3=glb(h_1, h_3) = (h_{1, 1}, h_{3, 2})$ is a descendant of  $h_1, h_2$, and $h_3$, contradicting $u_3 \in T_p$.

So we have shown that $|T_p| \leq \frac{|H_p|^2}{4}$. Since $\sum_{p \in P} |H_p| \leq A|P|$, and trivially $|H_p|\leq |P|$ for all $p \in P$, Holder's Inequality implies that $\sum_{p \in P} |T_p| \leq  \sum_{p \in P} \frac{|H_p|^2}{4} \leq A|P|^2/4$. 

\eat{ Indeed,  write  $h_1=(P_{i_1}u_1,
  P_{j_1}u_2),h_2=(P_{i_2}u_2, P_{j_2}u_3)$,  where as in the proof of Theorem \ref{thm:multid}, $(P_i u_1, P_j u_2)$ denotes the element
obtained from $(u_1, u_2)$ by generalizing $i$ times on the first
dimension and $j$ times on the second dimension. We showed in the proof of Theorem \ref{thm:multid} that without loss of generality
the first component of $u_1$ is the first component of $h_1$ and the second
component of $u_2$ is the second component of $h_1$. This means we may write $h_1=(a, d), u_1=(a, b), u_2=(c,d)$ for one-dimensional entries $a, b, c, d$ with $b$ a generalization of $d$
and $c$ a generalization of $a$.

Likewise, we may assume that the first component of $u_3$ is the first component of $h_2$ and the second
component of $u_2$ is the second component of $h_2$ (a symmetric argument will handle the case where 
the first component of $u_2$ is the first component of $h_2$ and the second
component of $u_3$ is the second component of $h_2$). This means we may write $h_2=(e, d), u_2=(c,d), u_3=(e, f), $ for some $e, f$ with
 $c$ a generalization of $e$ and $f$ a generalization of $d$.

Using the fact that $c$ is a generalization of both $a$ and $e$, we see that if $h_3 = \text{glb}(u_1, u_3)$ is non-trivial, then either
$a$ generalizes to $e$ or $e$ generalizes to $a$. Assume without loss of generality that $a$ generalizes to $e$. Then
$h_1=(a, d) \prec (e, d) = \text{glb}(u_2, u_3) \prec u_3$, so $h_i$ is a descendant of $u_1, u_2$, and $u_3$, contradicting $h_i \in T_p$.}

Thus, we have shown that $$\sum_{p \in P} F_p \geq |P| \phi N - |P| \epsilon N - A |P| \epsilon N - \frac{A|P|^2}{4} \epsilon N.$$

Now note that $A N \geq \sum_{p \in P} F_p$, because each fully-specified item $e$ can only contribute to the true conditioned  counts of incomparable approximate HHHs. For if $e$ contributes to the conditioned  count of both $p$ and $q$ then $e \preceq p \wedge e \not\preceq P_p$ and $e \preceq q \wedge e \not\preceq P_q$. If $p$ and $q$ are comparable, then this implies either $q \preceq p$ or $p \preceq q$, contradicting the fact that $e \not\prec P_p$ and $e \not\preceq P_q$. Thus, we see that  $AN \geq (\phi N - (A + 1) \epsilon N ) |P| - \frac{A|P|^2}{4} \epsilon N$.

Dividing through by $N$ and subtracting $A$ from both sides yields
\vspace{-2mm}
$$0 \geq -A + (\phi - (A+1) \epsilon)  |P| - \frac{A\epsilon}{4} |P|^2.$$

Using the quadratic equation,  this holds if and only if 

$$|P| \leq -2\frac{(1 + A) \epsilon - \phi + \sqrt{(\phi - (1 + A) \epsilon)^2 - A^2\epsilon}}{A\epsilon}$$ or  
\vspace{-2mm}
$$|P| \geq -2\frac{(1 + A) \epsilon - \phi - \sqrt{(\phi - (1 + A) \epsilon)^2 - A^2\epsilon}}{A\epsilon},$$

\vspace{-2mm}
and we can rule out the latter case for small enough $\epsilon$ via trivial upper bounds on $|P|$ such as $|P| \leq \frac{H}{\phi}$.
\end{proof}


\end{document}